\algrenewcommand\algorithmicif{\textbf{if}}
\newlength{\myindent}
\newcommand{\InlineIf}[2]{%
    \State \algorithmicif\ #1\ \algorithmicthen\ #2%
}
\newcommand{\defeq}{\vcentcolon=}
\newtheorem{example}{Example}
\newtheorem{theorem}{Theorem}
\newtheorem{lemma}{Lemma}
\newtheorem{corollary}{Corollary}
\newtheorem{proposition}{Proposition}
\theoremstyle{definition}
\newtheorem{definition}{Definition}
\theoremstyle{remark}
\newtheorem{claim}{Claim}
\DeclareMathOperator*{\argmin}{arg\,min}
\title{Computing Optimal Equilibria in Repeated Games with Restarts}
\author{
    Ratip Emin Berker \And Vincent Conitzer\\
    \affiliations
    Foundations of Cooperative AI Lab (FOCAL), Computer Science Department, Carnegie Mellon University\\
    \emails
    \{rberker, conitzer\}@cs.cmu.edu 
}
\begin{document}

\maketitle

\begin{abstract}
 Infinitely repeated games can support cooperative outcomes that are not equilibria in the one-shot game.  The idea is to make sure that any gains from deviating will be offset by retaliation in future rounds. However, this model of cooperation fails in anonymous settings with many strategic agents that interact in pairs.  Here, a player can defect and then avoid penalization by immediately switching partners.  In this paper, we focus on a specific set of equilibria that avoids this pitfall.  In them, agents follow a designated sequence of actions, and \textit{restart} if their opponent ever deviates. We show that the socially-optimal sequence of actions consists of an infinitely repeating goal value, preceded by a \textit{hazing period}. We introduce an equivalence relation on sequences and prove that the computational problem of finding a representative from the optimal equivalence class is (weakly) NP-hard. Nevertheless, we present a pseudo-polynomial time dynamic program for this problem, as well as an integer linear program, and show they are efficient in practice. Lastly, we introduce a fully polynomial-time approximation scheme that outputs a hazing sequence with arbitrarily small approximation ratio. 
\end{abstract}

\section{Introduction}\label{sec:intro}
In social dilemmas, individual incentives hinder collective benefit: mutual cooperation is the best outcome for both players, but it is not a Nash equilibrium. Consider the symmetric two-player game in Table \ref{table:game}:
    \begin{table}[h]
            \centering
            \begin{tabular}{|c|c|c|c|}
            \hline & $D$& $C_1$ &$C_2$\\
            \hline
            $D$ & 4,4 & 11,0 & 14,0\\
            \hline $C_1$ & 0,11 & 5,5 & 0,0\\
            \hline $C_2$ & 0,14 & 0,0 & 8,8\\ \hline 
            \end{tabular}
            \caption{\label{table:game}Payoffs of a  two-player symmetric game (row, column)}
    \end{table}
    
    Notice that for each player, $D$ is the strictly dominant action, ensuring the maximum payoff against any fixed action of their opponent. This results in $(D,D)$ being the only Nash equilibrium. In the infinitely repeated version of this game, however, mutual cooperation can be achieved: Consider the grim-trigger strategy where both players agree to play $C_2$ in every round, but if their opponent defects, they switch to playing $D$ in all future rounds. During the game, a player can increase their payoff in a round by at most $14-8=6$ by defecting, but their payoff in all future rounds is now bound above by $4$, as their opponent will switch to $D$, resulting in a per-round net loss of at least $8-4=4$ compared to if they had stuck to $C_2$. If the players value future rounds sufficiently (i.e., have a `discount factor' close to 1, assumed true for the rest of this section), the gain of defecting ($6$, once) will be offset by loss due to the opponent retaliating ($4$, all future rounds), resulting in neither of the players defecting from $C_2$.
    
    This model of cooperation, however, fails in anonymous settings with many players, in which players can simply find a new partner to play with. 
    This setting can model any situation that involves two-player interactions within a larger pool, such as monogamous relationships, employer-employee interactions, and two-person research collaborations. Critically, each of these `partnerships' can last for arbitrarily many rounds, but can be terminated at any point by one of the partners, who can then find themselves a new partner in the larger pool. In this game-theoretic setting, a player can defect and avoid retaliation, if they are able to switch partners immediately following the defection. If there is no way for a player to check their partner's history (that is, players are \textit{anonymous}), this setting may result in the emergence of `serial defectors,' who perpetually defect on a partner and move on the next, even if all their partners would follow the `grim trigger' strategy if the relationship continued. This is especially relevant for settings where the players are AI agents (such as trading bots), who might more easily conceal their identity compared to traditional human players.

    To avoid this pitfall, we turn to a specific type of equilibria in such infinitely repeated two-player games, those with \textit{restarts}. Instead of the grim-trigger strategy of `defect forever once the opponent deviates', consider instead a strategy profile where all players agree on a planned sequence of actions to follow, and they restart the same sequence with a new partner if their opponent ever deviates from it. (In the context of our paper, the punishment strategy of simply leaving the relationship, thereby forcing the partner to restart as well, is in fact without loss of generality: no other punishment strategy could be more effective, because if it were more effective, the partner would simply leave the relationship.)\footnote{The type of equilibrium we study remains an equilibrium in the standard repeated game setting (without restarts), and so is a {\em refinement} of the traditional concept.  In this case, if one's partner deviates, one continues with the same partner but plays {\em as if} one were starting from the beginning, which is sufficient to deter deviation.}

    For instance, going back to the game in Table \ref{table:game}, say the two players agree on the sequence $(D,C_1,C_2,C_2,C_2,\ldots)$. If the players stick to the plan, the sequence of payoffs they receive is $(4,5,8,8,8,\ldots)$. Neither will deviate from the plan in the first round ($D$), as doing so can only lower their payoff. In the second round $(C_1)$, deviating brings a payoff of $11$, but results in the opponent ending the relationship and having to go back to the start of the sequence with someone else.  This results in a per-round average payoff of $(4+11)/2=7.5$ for the deviating player, which is less than the per-round payoff of 8 she can eventually receive by sticking to the planned sequence. Similarly, once the players get to the $(C_2)$ portion of their sequence, any deviation can bring at most a payoff of $14$ (an increase of $6$), but results in a restart where the next two rounds (4,5) does a total damage of $(8-4)+(8-5)=7$ when compared to sticking with the sequence, making the overall sequence stable. Conceptually, the planned sequence consists of a `hazing' period $(D,C_1)$, followed by lasting socially-optimal cooperation $(C_2,C_2,...)$.
    
    Not every planned sequence is stable: an alternative plan $(C_1,C_2,C_2,C_2,\ldots)$ would incentivize repeatedly deviating on the first step, to obtain 11 per round. Another alternative, $(D,C_2,C_2,C_2,\ldots)$, would see a player repeatedly deviating on the second step, ensuring a per round utility of $(4+14)/2=9>8$. Both of these alternatives suffer from under-hazing, as the cost of a restart for a defecting player fails to offset the gains from defection.
    
$(D,C_1,C_2,C_2,\ldots)$ is not the only stable sequence; so is $(D,C_1, C_1,\ldots,C_1,C_2,C_2,\ldots)$, for an arbitrarily large (positive) number of $C_1$s. However, this results in unnecessarily delaying the socially optimal outcome of $C_2$, i.e., over-hazing. Hence, in this paper we ask: how can we optimize the payoff of a planned sequence, while ensuring its stability? 

\subsection{Related Work}

Repeated games (without restarts) have long been of interest to the AI community.  In contrast to one-shot games, they introduce a temporal component to the game, and they allow modeling settings where cooperation can be sustained thanks to the threat of deviators being punished in future rounds.  This is now of particular interest in the context of the nascent research area of {\em cooperative AI}~\cite{Dafoe21:Cooperative}, especially for game-theoretic approaches to that~\cite{Conitzer23:Foundations}.
Thanks to the folk theorem, they also allow for more efficient computation of Nash equilibria than one-shot games, as was observed by~\cite{Littman05:Polynomial} for two players. With three or more players the problem becomes hard again~\cite{Borgs10:Myth}, though in practice it can often be solved fast~\cite{Andersen13:Fast} and if correlated punishment is allowed then the problem becomes easy again~\cite{Kontogiannis08:Equilibrium}.

Separately, the role of {\em anonymity} in game theory has long been of interest to the AI community.  Perhaps most significantly, in mechanism design, there is a long line of work on {\em false-name-proof} mechanisms, in which agents cannot benefit from participating multiple times using fake identifiers~\cite{Yokoo01:Robust,Yokoo04:Effect,Conitzer10:Using}.
This is conceptually related to the work in this paper, insofar as an agent that restarts with a different partner makes use of a degree of anonymity in the system.  However, in our context an agent does not use multiple identifiers simultaneously, and the work seems technically quite distinct.

Cooperation in repeated games with the possibility of rematching with a new partner
has been studied in the economic theory community, similarly identifying the importance of building up relations gradually. Our work differs from this literature in that we focus on the computational problem of optimizing the equilibrium, for arbitrary games, whereas the economics literature focuses on obtaining characterization results in specific settings such as lending and borrowing games \cite{Saikat:lender_borrower,WEI2019311}, prisoner's dilemma \cite{Fujiwara-Greve:VoluntaryPD,RobYan:LongTerm,Izquierdo:Leave}, and environments with agents of multiple types \cite{Kranton:repeated,Ghosh:Repeated,RobYan:LongTerm}.

\subsection{Overview}

In Section \ref{sec:prelim}, we introduce the concepts and notation. In Section \ref{sec:optseq}, we introduce optimal sequences and prove their various properties. In Section \ref{sec:eq_class}, we define an equivalence relation on sequences based on their total discounted utility with high discount factor. In Section \ref{sec:comp_prob}, we formalize the computational problem of computing a representative of the optimal equivalence class of this relation. As our main results, we prove an NP-hardness result and present three algorithms for the problem: a pseudo-polynomial time dynamic program and an integer linear program that exactly solves it, as well as a fully polynomial time approximation scheme. In Section \ref{sec:exp}, we report runtimes from our experiments with these algorithms. We present directions for future research in Section \ref{sec:conc}.

\section{Preliminaries}\label{sec:prelim}
In this work, we restrict our attention to \textit{symmetric games and strategies}, which allows us to condense the representation of a game and thereby simplify the presentation. We discuss moving beyond symmetry in Section \ref{sec:conc}.

\subsection{Problem Instance}
 Given a two-player finite, symmetric normal form game $\Gamma$ with actions $A=\{a^{(1)},...,a^{(n)}\}$ and integer payoffs, we define:
\begin{itemize}
    \item The \textit{cooperative payoff} function $p: A \rightarrow \mathbb{Z}$, which maps $a^{(j)}$ to the payoff that the two players receive in $\Gamma$ if they both play $a^{(j)}$ for each $j \in [n]$.
    \item The \textit{deviation payoff} function $p^*: A \rightarrow \mathbb{Z}$, which maps each $a^{(j)}$ to the max.\ payoff a player can achieve given they play any $A \setminus \{a^{(j)}\}$ and their opponent plays $a^{(j)}$.
    \item The \textit{discount factor} $\beta \in [0,1)$, such that if a player receives payoff $p_i$ in round $i \in \mathbb{N}$, her total discounted utility is $\sum_{i=0}^\infty \beta^i p_i$. To avoid confusion, we use subscript $i \in \mathbb{N} = \{0,1,\ldots\}$ to iterate over the rounds, and superscript $j\in [n] =\{1,\ldots, n\}$ to iterate over actions.
\end{itemize}
Thus any finite symmetric game $\Gamma$ can for our purposes be represented as $G=(\{(p^{(j)}, p^{*(j)})\}_{j \in [n]}, \beta)$, where $p^{(j)} \defeq p(a^{(j)})$ and  $p^{*(j)} \defeq p^*(a^{(j)})$. For instance, the game in Table \ref{table:game} is represented as $G_{\ref{table:game}}=(\{(4,0), (5,11),(8,14)\}, \beta)$. 
\subsection{Equilibria with Restarts}
In this paper, we focus on strategy profiles $\boldsymbol{\sigma}$ where each player plans to follow the sequence of moves $(a_0,a_1,...) \in A^\mathbb{N}$, and will restart the sequence if the other player deviates from it. We define $p_i \defeq p(a_i)$ and $p^*_i \defeq p^*(a_i)$ for all $i\in \mathbb{N}$. Two forever-cooperating agents will achieve total discounted utility $\sum_{i=0}^\infty \beta^i p_i$, whereas an agent deviating on round $k$ will achieve total discounted utility $\sum_{i=0}^{k-1} \beta^i p_i+ \beta^k p^*_k + \sum_{i=0}^\infty \beta^{i+k+1} p_i$.\footnote{Note that this assumes the deviating agent never deviates again; this is WLOG, since deviating repeatedly in this way is an improvement if and only if deviating once in this way is an improvement.} Accordingly, for $\boldsymbol{\sigma}$ to be stable (i.e., a Nash equilibrium), we need (for all $k \in \mathbb{N}$): 
\begin{align}
    \sum_{i=0}^{k-1} \beta^i p_i+ \beta^k p^*_k
 \leq (1-\beta^{k+1}) \sum_{i=0}^\infty \beta^{i} p_i \label{eq:stable}
\end{align}
 
Since $\boldsymbol{\sigma}$ is completely defined by the planned sequence, we can succinctly represent it using its corresponding sequence of payoffs $\left(p_i\right)_{i\in \mathbb{N}} \in \{p^{(j)}\}_{j \in [n]}^\mathbb{N}$ and $\left(p^*_i\right)_{i\in \mathbb{N}}\in \{p^{*(j)}\}_{j \in [n]}^\mathbb{N}$.

\section{Optimal Sequences}\label{sec:optseq}
Given a game instance $G$, there can be infinitely many stable sequences $(p_i)_{i\in \mathbb{N}}$. For instance, given $G_{\ref{table:game}}$, our representation of the game in Table \ref{table:game}, both of the following sequences are stable for $\beta=0.9$, as they fulfill Equation (\ref{eq:stable}) for all $k \in \mathbb{N}$:
\begin{align*}
p^a_i = \begin{cases}4 &\text{if }i \leq 10\\ 5 &\text{otherwise}\end{cases} ~~\text{and} ~~ p^b_i = \begin{cases}4 &\text{if }i \leq 3 \\ 8&\text{if }i > 3, i\text{ even} \\ 5 &\text{otherwise}\end{cases} 
\end{align*}
The multitude of stable sequences for a given game raises two questions: how can we (i)  determine which sequences are more desirable than others, and (ii) compute the most desirable sequences? The answer to (i) is straightforward in our context of symmetric games, where the payoffs of two cooperating players are the same. Thus, one can focus on maximizing total discounted utility without fairness concerns. For instance, $\sum_{i=0}^\infty \beta^i p_i^a \approx 43.1 $ and $\sum_{i=0}^\infty \beta^i p_i^b \approx 56.9 $, indicating $(p_i^b)_{i \in \mathbb{N}}$ to be `the better plan'. More generally, we define:
\begin{definition}[Optimal sequence] Given any $G=(\{(p^{(j)},p^{*(j)})\}_{j \in [n]}, \beta)$, a sequence of payoffs $(p_i)_{i \in \mathbb{N}}$ \textbf{surpasses} sequence $(p'_i)_{i \in \mathbb{N}}$ if  $\sum_{i=0}^\infty \beta^i p_i > \sum_{i=0}^\infty \beta^i p'_i $. A sequence $(p_i)_{i \in \mathbb{N}}$ is an \textbf{optimal sequence} if:
\begin{enumerate}
    \item it is stable, i.e., satisfies Equation (\ref{eq:stable}) for all $k \in \mathbb{N}$, and
    \item it is not surpassed by any other other stable sequence.
\end{enumerate}
\end{definition}
\noindent We first prove an existence result:
\begin{proposition} \label{prop:opt_existence}
    For any $G=(\{(p^{(j)},p^{*(j)})\}_{j\in [n]}, \beta)$, if there is any stable sequence, then an optimal sequence exists. This sequence is not necessarily unique.
\end{proposition}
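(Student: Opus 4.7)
The plan is a compactness argument on $A^{\mathbb{N}}$ equipped with the product of discrete topologies, which is a compact metric space (Tychonoff, plus countability of the index set). Let $S$ denote the set of stable sequences and $U(\mathbf{p}) \defeq \sum_{i=0}^{\infty} \beta^i p_i$ the total discounted utility functional. I would show that (i) $U$ is continuous on $A^{\mathbb{N}}$, and (ii) $S$ is closed in $A^{\mathbb{N}}$. Combined with the hypothesis $S \neq \emptyset$, these yield an optimal sequence: $U$ attains its maximum on the nonempty compact set $S$.

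For (i), set $M \defeq \max_{j} |p^{(j)}|$. Any two sequences that agree on their first $N$ coordinates have total utilities differing by at most $2M\beta^N/(1-\beta)$, which tends to $0$ as $N \to \infty$; this is precisely continuity in the product topology. For (ii), fix $k \in \mathbb{N}$ and observe that the LHS of constraint (\ref{eq:stable}) depends only on $p_0,\ldots,p_{k-1}$ and $a_k$, hence is continuous as a function of finitely many coordinates, while the RHS equals $(1-\beta^{k+1})U(\mathbf{p})$, a continuous multiple of $U$. Each constraint therefore defines a closed set, and $S$ is their intersection over $k \in \mathbb{N}$.

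For the non-uniqueness claim, I would exhibit a degenerate example: any game containing two distinct actions with identical payoff profile $p^{(1)} = p^{*(1)} = p^{(2)} = p^{*(2)} = v$, where $v$ is the maximum cooperative payoff. Every sequence drawn from $\{a^{(1)}, a^{(2)}\}$ then saturates (\ref{eq:stable}) at every $k$ and achieves $U = v/(1-\beta)$, giving uncountably many optimal sequences.

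The main subtlety will be handling the infinite tail in the RHS of the stability constraint when passing to limits; the uniform bound $M\beta^N/(1-\beta)$ on the tail of any sequence is what ultimately powers both continuity of $U$ and closedness of $S$. If a purely combinatorial presentation is preferred, the same argument can be phrased as a diagonal extraction: pick a sequence of stable sequences with utilities approaching $\sup_S U$, use finiteness of $A$ to extract a coordinate-wise convergent subsequence, and verify that the limit sequence is stable and attains the supremum via the same tail bound.
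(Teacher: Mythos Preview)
Your argument is correct and is, in substance, the same compactness idea the paper uses, but packaged more cleanly. The paper works on the image side: it proves (via an explicit diagonal extraction, exactly the ``combinatorial presentation'' you sketch at the end) that the set of achievable total utilities $U(S)\subset\mathbb{R}$ is closed, then uses boundedness to conclude the supremum is attained. You work on the domain side: $A^{\mathbb{N}}$ is compact, $S$ is closed, $U$ is continuous, so the extreme value theorem applies. Your framing is shorter and more standard, and it makes transparent why the tail bound $M\beta^N/(1-\beta)$ is doing all the work; the paper's version buries this inside the $\varepsilon$--$\delta$ bookkeeping of the diagonal construction.

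For non-uniqueness, the paper gives a nondegenerate six-action game with $\beta=1/2$ in which two structurally different sequences both realize the optimal utility $12$. Your example---two actions with identical $(p,p^*)$ pairs equal to the maximum cooperative payoff---is valid (the stability constraint holds with equality at every $k$, and $U=v/(1-\beta)$ is the global upper bound), but it is a degenerate instance in which the two actions are payoff-indistinguishable. This suffices for the ``not necessarily unique'' clause, though a referee might prefer an example where non-uniqueness is not an artifact of redundant actions.
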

\begin{proof}
    The proof follows from the following claim. To save space, we present the proof of the claim as well as an example for non-uniqueness in Appendix \ref{appendix:lemma1_proofs}.
    
    \begin{claim}\label{claim:closed}
        Given any instance $G$, the set of achieveable total utilities is a closed set, where we say total $t \in \mathbb{R}$ is achievable if there exists a stable sequence $(p_i)_{i\in \mathbb{N}}$ with  $\sum_{i=0}^\infty \beta^i p_i = t$.
    \end{claim}
 Since the set of achievable total discounted utilities is closed and bounded, it contains its supremum \cite[Thm. 2.28]{rudin1976principles}, proving an optimal stable sequence exists.
\end{proof}
Having proven the existence of optimal sequences, we now present two lemmas about their properties.
\begin{lemma} \label{lemma:repeating}
    Any optimal sequence will reach a step after which a single payoff (the highest in the sequence) is repeated forever. We call this payoff the \textbf{goal value} of the sequence.
\end{lemma}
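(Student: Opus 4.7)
The plan is to argue by an improvement: if the optimal sequence is not eventually constant at $p_{\max} \defeq \max_i p_i$, then flattening its tail to $p_{\max}$ yields a stable sequence with weakly higher utility, and optimality forces this flattening to be trivial.

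Let $(p_i)$ be optimal with total utility $U$, and denote the continuation value $V_k \defeq \sum_{i=k}^\infty \beta^{i-k} p_i$, so $V_0 = U$ and Equation~(\ref{eq:stable}) is equivalent to $V_k \geq p^*_k + \beta U$ for every $k$. Since the action set is finite, $p_{\max}$ is attained; let $t$ be the smallest index with $p_t = p_{\max}$, and let $j_{\max}$ be the corresponding action. Define a modified sequence $(p'_i)$ by $p'_i = p_i$ for $i < t$ and $p'_i = p_{\max}$ (via $a^{(j_{\max})}$) for $i \geq t$, with utility $U'$ and continuation values $V'_k$. Because $p_i \leq p_{\max}$ always, $V_t \leq p_{\max}/(1-\beta) = V'_t$, hence $U' - U = \beta^t (V'_t - V_t) \geq 0$, with equality iff the original tail from round $t$ was already constant at $p_{\max}$.

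Next I would verify stability of $(p'_i)$. For $k < t$ the deviation payoff is unchanged, and the identity
\begin{align*}
(V'_k - \beta U') - (V_k - \beta U) = (\beta^{t-k} - \beta^{t+1})(V'_t - V_t) \geq 0
\end{align*}
transfers the original stability constraint to $(p'_i)$. For $k \geq t$ the new deviation payoff is $p^{*(j_{\max})}$; combining the original stability at round $t$, which gives $p^{*(j_{\max})} \leq V_t - \beta U$, with $V_t \leq p_{\max}/(1-\beta)$ yields
\begin{align*}
p^{*(j_{\max})} + \beta U' \leq V_t(1 - \beta^{t+1}) + \beta^{t+1}\frac{p_{\max}}{1-\beta} \leq \frac{p_{\max}}{1-\beta} = V'_k.
\end{align*}
Hence $(p'_i)$ is stable, so optimality of $(p_i)$ forces $U' = U$, and then $V_t = p_{\max}/(1-\beta)$, i.e., $p_i = p_{\max}$ for all $i \geq t$.

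The delicate step is the stability check for $k \geq t$: the modification strictly raises the total utility (unless the original tail was already flat), which tightens every deviation constraint; one must combine the original constraint at round $t$, which bounds $p^{*(j_{\max})}$, with the upper bound $V_t \leq p_{\max}/(1-\beta)$ to show that the new flat tail still absorbs the extra slack $\beta(U'-U)$.
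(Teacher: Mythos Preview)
Your proof is correct and follows essentially the same route as the paper's: both identify the first index $t$ (the paper calls it $i^*$) at which the sequence maximum $p_{\max}$ occurs, flatten the tail from there, and verify stability separately for $k<t$ and $k\ge t$. Your continuation-value reformulation $V_k \geq p^*_k + \beta U$ makes the algebra slightly cleaner, and you phrase the argument as ``optimality forces the flattening to be trivial'' rather than the paper's contrapositive ``non-flat tails are non-optimal,'' but the constructions and the two stability checks are the same.
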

Intuitively, Lemma \ref{lemma:repeating} implies that any optimal sequence consists of an infinitely repeating goal value, the highest payoff, as well as a preceding \textit{hazing period}, a finite sequence of non-maximal payoffs. These two stages are interdependent: an agent cooperates during the hazing period due to the promise of the goal value, whereas any defection after reaching the goal value is avoided by the threat of facing the hazing period again.
\begin{lemma} \label{lemma:optimalgoal}
    For large enough $\beta$, the goal value of any optimal sequence is $p_{\Omega} \defeq \max_{j \in [n]} p^{(j)}$. 
\end{lemma}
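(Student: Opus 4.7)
I proceed by contradiction: suppose some optimal sequence $(p_i)$ has goal value $p_g < p_\Omega$, and construct a stable sequence that surpasses it. Since the goal value lies in $\{p^{(j)}\}$ and cannot exceed $p_\Omega$, ruling out $p_g < p_\Omega$ forces equality. By Lemma \ref{lemma:repeating} the sequence has finite hazing length $h$, and $p_g$ is the \emph{maximum} of its payoffs, so $p_i \leq p_g$ for every $i$.

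Let $a_\Omega$ achieve $p_\Omega$, set $m = \lceil (p^*(a_\Omega) - p_\Omega)/(p_\Omega - p_g)\rceil + 1$ (a constant depending only on the game), and define $(p'_i)$ by $p'_i = p_i$ for $i < h+m$ and $p'_i = p_\Omega$ for $i \geq h+m$. Writing $U := \sum_i \beta^i p_i$, $U_k := \sum_{i \geq k} \beta^{i-k} p_i$, and $U', U'_k$ for the primed analogues, a direct telescoping calculation yields
\[
U' - U = \frac{\beta^{h+m}(p_\Omega - p_g)}{1-\beta} > 0,
\]
so $(p'_i)$ surpasses $(p_i)$. For stability, I rewrite Equation (\ref{eq:stable}) as $p^*_k \leq U_k - \beta U$ and split into two cases. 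For $k < h+m$, the identities $U'_k - U_k = \beta^{h+m-k}(p_\Omega - p_g)/(1-\beta)$ and $U' - U = \beta^{h+m}(p_\Omega - p_g)/(1-\beta)$ give
\[
(U'_k - \beta U') - (U_k - \beta U) = \frac{(p_\Omega - p_g)(\beta^{h+m-k} - \beta^{h+m+1})}{1-\beta} \geq 0,
\]
so each original stability constraint translates into an at-least-as-slack one for $(p'_i)$. For $k \geq h+m$, $U'_k = p_\Omega/(1-\beta)$ and the goal-phase condition reduces to $p^*(a_\Omega) - p_\Omega \leq \beta H'$ where $H' := p_\Omega/(1-\beta) - U'$. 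Using $p_i \leq p_g$, I lower-bound $H' \geq (p_\Omega - p_g)(1 - \beta^{h+m})/(1-\beta) \geq (p_\Omega - p_g)(1 - \beta^m)/(1-\beta)$, and since $\beta(p_\Omega - p_g)(1 - \beta^m)/(1-\beta) \to m(p_\Omega - p_g) > p^*(a_\Omega) - p_\Omega$ as $\beta \to 1$, the required inequality holds for all $\beta$ above some threshold depending only on $m$.

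The main obstacle I anticipate is ensuring this $\beta$-threshold depends only on the game data, not on the (possibly $\beta$-dependent) hazing length $h$ of the hypothetical optimal sequence. The key simplification is using $p_i \leq p_g$ (Lemma \ref{lemma:repeating}) to inflate the original hazing contributions in $H'$, turning the lower bound into a geometric sum over all $h+m$ rounds; this bound is monotone in $h$, so dropping the first $h$ terms yields a clean bound depending only on $m$ and hence only on game data. Since $p_g$ ranges over the finite set $\{p^{(j)}\}\setminus\{p_\Omega\}$, taking the maximum threshold over these values gives a single $\beta^* < 1$ valid for the statement.
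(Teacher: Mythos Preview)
Your proof is correct, but it follows a genuinely different route from the paper's. The paper proceeds in two separate steps: it first constructs a \emph{standalone} stable sequence with goal $p_\Omega$ for large $\beta$ (repeat $p_0$ from any stable sequence $a$ times, then play $p_\Omega$ forever, using that stability at step~$0$ forces $p_0^*<p_\Omega$), and then separately argues that this sequence surpasses every sequence whose goal value is below $p_\Omega$. You instead work directly with the putative optimal sequence $(p_i)$ of goal value $p_g<p_\Omega$, append $m$ extra rounds of $p_g$ and then switch to $p_\Omega$, and derive a contradiction in one shot. The gain from your route is that stability at all steps $k<h+m$ is inherited \emph{for free} from the stability of $(p_i)$ via the slack computation $(U'_k-\beta U')-(U_k-\beta U)\ge 0$; the paper has to verify stability of its witness from scratch at every step. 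The gain from the paper's route is that it produces an explicit, universal $p_\Omega$-goal sequence independent of any hypothetical counterexample, which is conceptually cleaner and is in fact reused later (in the proof of Proposition~\ref{proposition:opteq}). Your handling of the uniformity of the $\beta$-threshold---bounding $H'$ below by $(p_\Omega-p_g)(1-\beta^m)/(1-\beta)$ via $p_i\le p_g$ and then noting the bound depends only on $m$ and hence only on $p_g$, which ranges over a finite set---is exactly the right move and mirrors the paper's implicit quantification over the finitely many possible non-maximal goal values.
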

Together, Lemmas \ref{lemma:repeating} and \ref{lemma:optimalgoal} (proven in Appendix \ref{appendix:goal_lemmas}) allow easily ruling out many sequences as non-optimal for sufficiently high values of $\beta$, including $(p^a_i)_{i\in\mathbb{N}}$ and  $(p^b_i)_{i\in\mathbb{N}}$ from above, the latter since it never converges to a goal value, and the former since it converges to one that is not the highest in the game. As we have seen in Lemma \ref{lemma:optimalgoal}, the large $\beta$ setting achieves the largest goal values, and accordingly, the largest gap between achieving cooperation and failing to do so, adding to the significance of computing stable and optimal sequences. Indeed, in much of the literature on repeated games in general (including folk theorems), the
focus is on the limit case where $\beta \rightarrow 1$.
Accordingly, in the next section, we study the optimality of sequences in the $\beta \rightarrow 1$ limit. 
\section{Limit-Utility Equivalence Classes}\label{sec:eq_class}
We say that a sequence $(p_i)_{i \in \mathbb{N}}$ is stable in the $\beta \rightarrow 1$ limit if there exists a $\beta'$ such that $(p_i)_{i \in \mathbb{N}}$ is stable for all $\beta>\beta'$. 
Of course, as $\beta\rightarrow 1$, the total discounted utility diverges. Hence, we now introduce an equivalence relation that allows us to compare the total discounted utilities in this limit:
\begin{definition}[Limit-utility equivalence]
    Given a game $G=\{(p^{(j)}, p^{*(j)})\}_{j \in [n]}$, two stable sequences $(p_i)_{i\in \mathbb{N}}$ and $(p'_i)_{i\in \mathbb{N}}$ are \textbf{limit-utility equivalent} if $\lim_{\beta \rightarrow 1} \sum_{i=0}^\infty \beta^i (p_i-p'_i) = 0$.  
\end{definition}
 Indeed, this relationship is symmetric, reflexive, and transitive. We now introduce the optimal equivalence class: 

\begin{proposition} \label{proposition:opteq}There exists a well-defined optimal limit-utility equivalence class: that is, an equivalence class such that for any member sequence $(p_i)_{i\in \mathbb{N}}$ and any other sequence $(p'_i)_{i\in \mathbb{N}}$, we have $\lim_{\beta \rightarrow 1} \sum_{i=0}^\infty \beta^i (p_i-p'_i) \geq 0$.
\end{proposition}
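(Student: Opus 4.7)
The plan is to identify the optimal class explicitly: it consists of \emph{tail-$p_\Omega$} sequences (those equal to $p_\Omega \defeq \max_j p^{(j)}$ at all but finitely many indices) that minimize the \emph{hazing cost}
\[
H \defeq \sum_{i=0}^{\infty}(p_\Omega - p_i).
\]
Because payoffs are integers and $p_\Omega$ is the maximum cooperative payoff, every summand is a non-negative integer, so for tail-$p_\Omega$ sequences $H \in \mathbb{Z}_{\ge 0}$. By Lemmas~\ref{lemma:repeating} and~\ref{lemma:optimalgoal}, optimal sequences for large $\beta$ are tail-$p_\Omega$, which motivates this candidate.

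First I would rule out non-tail-$p_\Omega$ competitors. If a stable-in-the-limit sequence $(p_i)$ has infinitely many entries with $p_i < p_\Omega$, each such entry contributes $p_\Omega - p_i \ge 1$. Since $\sum_{i=0}^\infty \beta^i (p_\Omega - p_i)$ is a monotone-in-$\beta$ series of non-negative terms, Abel's theorem yields
\[
\lim_{\beta \to 1^-}\sum_{i=0}^\infty \beta^i(p_\Omega - p_i) \;=\; \sum_{i=0}^{\infty}(p_\Omega - p_i) \;=\; +\infty.
\]
Hence, compared to any tail-$p_\Omega$ sequence $(p'_i)$ (which has finite $H'$), $\lim_{\beta \to 1^-}\sum \beta^i(p'_i - p_i) = +\infty \ge 0$, so $(p_i)$ cannot belong to a class that satisfies the required dominance property.

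Next, within the tail-$p_\Omega$ family the comparison simplifies. Since $(p_i - p'_i)$ is eventually zero, by Abel's theorem
\[
\lim_{\beta \to 1^-}\sum_{i=0}^\infty \beta^i(p_i - p'_i) \;=\; \sum_{i=0}^\infty (p_i - p'_i) \;=\; H' - H.
\]
Thus limit-utility equivalence on tail-$p_\Omega$ sequences is exactly equality of $H$, and ``surpasses'' is exactly ``strictly smaller $H$''. The set of hazing costs realized by stable-in-the-limit tail-$p_\Omega$ sequences is a nonempty subset of $\mathbb{Z}_{\ge 0}$, so by well-ordering it attains a minimum $H_{\min}$. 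The class of tail-$p_\Omega$ sequences with $H = H_{\min}$ is then the desired optimal equivalence class: any member pair has limit difference $0$, and against any other stable-in-the-limit sequence (whether tail-$p_\Omega$ with larger $H$ or not tail-$p_\Omega$ at all) the limit difference is strictly positive.

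The main obstacle is the nonemptiness invoked above: I must produce at least one stable-in-the-limit tail-$p_\Omega$ sequence. I would address this by combining Lemmas~\ref{lemma:repeating} and~\ref{lemma:optimalgoal}, which for each sufficiently large $\beta$ furnish a stable tail-$p_\Omega$ sequence, with a pigeonhole/compactness argument: the stability inequalities (\ref{eq:stable}) are polynomial in $\beta$, there are only countably many tail-$p_\Omega$ candidates, and the hazing lengths arising from $\beta$-optimal sequences as $\beta \to 1$ are bounded (else their utility would fall below a simple tail-$p_\Omega$ benchmark), so some fixed tail-$p_\Omega$ sequence is stable for all $\beta$ in some $(\beta',1)$. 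Beyond this point the argument is elementary; its essential leverage is the integrality of payoffs, which discretizes $H$ and reduces optimization to well-ordering.
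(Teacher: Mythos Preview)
Your approach is essentially the same as the paper's: both rule out sequences not eventually equal to $p_\Omega$, observe that for tail-$p_\Omega$ sequences the limit-utility difference equals the difference in (integer-valued) total hazing, and then invoke well-ordering of $\mathbb{Z}_{\ge 0}$ to obtain $H_{\min}$. For the nonemptiness step, the paper simply appeals to Lemma~\ref{lemma:optimalgoal}, whose proof explicitly constructs a single fixed tail-$p_\Omega$ sequence that is stable for all sufficiently large $\beta$; your pigeonhole/compactness argument is workable but is more machinery than needed once that construction is in hand.
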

\begin{proof}
    By Lemmas \ref{lemma:repeating} and \ref{lemma:optimalgoal}, any $(p_i)_{i\in \mathbb{N}}$ that converges to $p_\Omega = \max_{j \in [n]} p^{(j)}$ will surpass any $(p'_i)_{i\in \mathbb{N}}$ that does not, implying the equivalence class of the latter cannot be optimal. Ruling such sequences out allows us to define for any sequence of payoffs $(p_i)_{i \in \mathbb{N}}$ a corresponding sequence of hazings $(h_i)_{i \in \mathbb \mathbb{N}}$, where $h_i = p_\Omega - p_i$. For any two sequences that converge to $p_\Omega$, we have $\lim_{\beta \rightarrow 1} \sum_{i=0}^\infty \beta^i (p_i-p'_i) = \sum_{i=0}^\infty h'_i -\sum_{i=0}^\infty h_i$, where neither of the sums diverge since $h_i \neq 0$ or $h'_i \neq 0$ for only finitely many $i$. Hence, the equivalence class of a sequence with goal value $p_\Omega$ is entirely determined by the (non-discounted) sum of its per-round hazings. By Lemma \ref{lemma:optimalgoal}, there exists at least one such stable sequence $(h_i)_{i \in \mathbb{N}}$, say with total hazing $H= \sum_{i=0}^\infty h_i$. Since the total non-discounted hazing for any sequence is an integer and bounded below by 0, there can only be finitely many improvements over $H$, implying that there is a well-defined minimum total hazing $H_{min}$ with at least one corresponding stable sequence. Hence, any stable sequence with total hazing $H_{min}$ cannot be surpassed (as $\beta \rightarrow 1$) by any other stable sequence, either converging to $p_\Omega$ or not, proving its equivalence class is optimal. 
\end{proof}
\section{Computational Problem: OptRep}\label{sec:comp_prob} 
Having proven the existence of the optimal equivalence class, the natural next question becomes: how do we compute a (representative) sequence from this class? Before formalizing this as computational problem, we investigate the stability condition given in Equation (\ref{eq:stable}) in the $\beta \rightarrow 1$ limit:

\begin{proposition}\label{prop:hstability}
    A sequence $(p_i)_{i \in \mathbb{N}}$ with goal value $p_\Omega = \max_{j \in [n]} p^{(j)}$ is stable in the $\beta \rightarrow 1$ limit if and only if the following holds for all $k \in \mathbb{N}:$
\end{proposition}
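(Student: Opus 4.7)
My plan is to substitute $h_i \defeq p_\Omega - p_i$ into Eq.~(\ref{eq:stable}), simplify, pass to the $\beta \to 1$ limit by continuity, and then argue both directions of the equivalence. Since the sequence has goal value $p_\Omega$, there is a finite $T$ with $h_i = 0$ for all $i \geq T$, so $\Phi(\beta) \defeq \sum_{i=0}^\infty \beta^i h_i$ is actually a polynomial in $\beta$, and $\sum_{i=0}^\infty \beta^i p_i = p_\Omega/(1-\beta) - \Phi(\beta)$. Combined with the identity $(1-\beta^{k+1})/(1-\beta) = \sum_{j=0}^k \beta^j$, the $p_\Omega$ contributions cancel exactly, and Eq.~(\ref{eq:stable}) rearranges to the equivalent
\begin{equation*}
\beta^k(p^*_k - p_\Omega) \;\leq\; \sum_{i=0}^{k-1} \beta^i h_i \;-\; (1-\beta^{k+1})\,\Phi(\beta).
\end{equation*}
Dividing by $\beta^k$, both sides are continuous at $\beta = 1$, and the RHS tends to $\sum_{i=0}^{k-1} h_i$ as $\beta \to 1^-$, giving the claimed condition $p^*_k - p_\Omega \leq \sum_{i=0}^{k-1} h_i$ in the limit.

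For the ``only if'' direction, assume the sequence is stable on some $(\beta', 1)$; passing to $\beta \to 1^-$ and invoking continuity preserves each inequality and yields the limit condition. For the ``if'' direction, suppose the limit condition holds. The crucial observation is that the infinite family of constraints collapses for $k \geq T$: both $p^*_k$ (the deviation payoff against the goal-value action, a fixed constant) and $\sum_{i=0}^{k-1} h_i$ (which equals the total hazing $H$) become $k$-independent, so only finitely many distinct constraints must be checked. Let $f_k(\beta)$ denote the gap between the two sides of Eq.~(\ref{eq:stable}); each $f_k$ is continuous with $f_k(1) \leq 0$, so by continuity there exists $\beta_k < 1$ with $f_k \leq 0$ on $(\beta_k, 1)$. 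Setting $\beta' \defeq \max_k \beta_k$ yields uniform stability on $(\beta', 1)$.

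The main obstacle will be the boundary case in which the limit inequality holds with equality at some $k$. A plain continuity argument fails there, and a short first-order expansion of $f_k$ at $\beta = 1$ in fact shows that, whenever any hazing is present, $f_k(\beta) > 0$ just below $\beta = 1$, so stability breaks. Accordingly, I expect the published condition to be strict at indices where $p^*_k > p_\Omega$, or else to include a brief derivative-based argument ruling out this borderline; the remainder of the reasoning is a routine algebra-plus-continuity exercise.
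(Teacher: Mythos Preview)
Your proposal is correct and follows essentially the same route as the paper: both pass the stability inequality~(\ref{eq:stable}) to the $\beta\to 1$ limit to obtain the condition $\sum_{i<k}(p_\Omega-p_i)\ge p^*_k-p_\Omega$, and both dispose of the equality case by a first-order (derivative) argument at $\beta=1$ to conclude that the published condition is strict, exactly as you anticipated. The only technical difference is cosmetic: the paper applies L'H\^opital to the right-hand side of~(\ref{eq:repeatingstability}), whereas your substitution $h_i=p_\Omega-p_i$ and the identity $(1-\beta^{k+1})/(1-\beta)=\sum_{j\le k}\beta^j$ make the $p_\Omega$ terms cancel exactly so the limit follows from polynomial continuity alone. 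Your write-up is in fact slightly more complete than the paper's on the ``if'' direction: you explicitly observe that for $k\ge T$ the constraints collapse to a single one (equivalently $p^*_\Omega-p_\Omega\le \beta\,\Phi(\beta)$), so only finitely many distinct $f_k$ need to be controlled and a uniform threshold $\beta'=\max_k\beta_k$ exists---the paper leaves this implicit.
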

\begin{align}
    \sum_{i=0}^{k-1} (p_\Omega - p_i) > p^*_k - p_\Omega \label{eq:limitstability}
\end{align}
The proof is given in Appendix \ref{appendix:hstability}. Intuitively, the left-hand side of (\ref{eq:limitstability}) is the cost of restarting, as repeating each step $i$ results in a loss of $p_\Omega-p_i$ (the so-called hazing cost) compared to the goal value, whereas the right-hand side is the gains from deviating, represented by the advantage over the goal value one can gain by deviating now. The inequality needs to be strict because, intuitively, for any $\beta<1$, a ``tie'' between the two sides of the equation would be broken towards the side of deviation, as that deviation payoff comes earlier.

Motivated by this way of expressing the stability condition, we modify the representation of a game to fit our problem: Given $G=\{(p^{(j)},p^{*(j)})\}_{j \in [n]}$ and $p_\Omega= \max_{j\in[n]} p^{(j)}$, for each $j \in [n]$ we define a corresponding \textbf{\textit{hazing cost}} $h^{(j)} = p_\Omega - p^{(j)}$ and a \textbf{\textit{threshold}} $t^{(j)} = p^{*(j)}-p_\Omega$. We now represent the game as $G=\{(h^{(j)}, t^{(j)})\}_{j \in [n]}$ and any sequence of payoffs $(p_i)_{i \in \mathbb{N}}$ with the corresponding sequence of hazing costs $(h_i)_{i \in \mathbb{N}}$. The stability condition from (\ref{eq:limitstability}) then becomes:
\begin{align}
    \sum_{i=0}^{k-1}h_i > t_k \label{eq:hstability}
\end{align}
As shown in the proof of Prop \ref{proposition:opteq}, all members of the optimal limit-utility equivalence class necessarily have $p_\Omega$ as their goal value, and hence $h_i \neq 0$ for finitely many $i$, and the equivalence class of any such sequences is entirely determined by $\sum_{i=0}^\infty h_i$. Any such stable sequence must satisfy $\sum_{i=0}^\infty h_i > p^*_\Omega - p_\Omega \equiv \Delta$, to ensure (\ref{eq:hstability}) is fulfilled for the steps after reaching the goal value. Thus, the goal of finding a representative of the optimal equivalence class reduces to:

\begin{definition}[OptRep] Given $\{(h^{(j)}, t^{(j)})\}_{j \in [n]} \subset  \mathbb{Z}_+ \times \mathbb{Z}$ and $\Delta \in \mathbb{Z}_+$, \textbf{OptRep} asks to find a finite sequence $\left(h_i \right)_{i \in \{0\}\cup [\ell]} \in \{h^{(j)}\}_{j \in [n]}^{\ell+1}$ for some $\ell \in \mathbb{N}$ such that $\sum_{i=0}^{\ell}h_i$ is minimized, subject to:
\begin{center}
    $(\forall k \in \{0\}\cup[\ell]): \sum_{i=0}^{k-1} h_i > t_k
    \textbf{ and } \sum_{i=0}^{\ell} h_i >\Delta$
\end{center}
We now prove a hardness result for OptRep:
    
\end{definition}
\begin{theorem} \label{thm:nphard}
    OptRep is (weakly) NP-hard, and the corresponding decision problem is NP-complete. 
\end{theorem}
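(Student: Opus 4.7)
My plan is to show that the decision version of OptRep (``is there a stable sequence of total hazing at most $K$?'') is in NP and is NP-hard; because the reduction I have in mind is a direct encoding with only linear blow-up in bit length, weak NP-hardness of the optimization version falls out automatically.

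For NP membership I would take as certificate the multiplicity vector $(m_1,\ldots,m_n)\in\mathbb{Z}_{\geq 0}^n$ recording how many times each item type is used. This has $O(n\log K)$ bits since each $m_j\le K$. The subtlety is that the corresponding sequence can be exponentially long, so one cannot simply write it out position by position; instead, I would prove an exchange lemma stating that any multiset that admits a stable ordering also admits the ordering that sorts items by non-decreasing threshold. This follows from a routine swap argument: if two adjacent positions have their thresholds in the wrong order, interchanging them leaves the cumulative sum before either position unchanged, while the new assignment of thresholds to positions is only easier to satisfy. Granted this lemma, verifying a certificate reduces to checking, for each distinct threshold value $t^{(j)}$, that $\sum_{k\,:\,t^{(k)}<t^{(j)}} m_k h^{(k)} > t^{(j)}$, together with $\Delta < \sum_j m_j h^{(j)} \le K$, all in time polynomial in the input size. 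I expect this compact-certificate step to be the main obstacle, since it is where the sequential structure of OptRep has to be reasoned about carefully.

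For NP-hardness I would reduce from Unbounded Subset Sum (equivalently Integer Knapsack, problem SP13 in Garey and Johnson), a classical weakly NP-complete problem: given positive integers $a_1,\ldots,a_n$ and target $b$, do there exist non-negative integers $c_1,\ldots,c_n$ with $\sum_j c_j a_j = b$? From such an instance I would construct the OptRep instance with items $\{(a_j,-1)\}_{j\in[n]}$ and $\Delta = b-1$, and ask whether a sequence of total hazing at most $K = b$ exists. Since every threshold equals $-1$, which is strictly below any cumulative sum of positive hazings, the per-item stability constraints are vacuous, and only the terminal condition $\sum_i h_i > b-1$ is binding. Integrality then forces $\sum_i h_i \ge b$, so the OptRep decision answers yes iff some non-negative integer combination of the $a_j$'s equals $b$ exactly, which is precisely the Unbounded Subset Sum question. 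The reduction is polynomial in the binary input length, so the weak NP-hardness of Unbounded Subset Sum transfers directly to OptRep.
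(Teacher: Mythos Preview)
Your approach matches the paper's: the same reduction from Unbounded Subset Sum with all thresholds set to $-1$ and $\Delta=b-1$, and your exchange lemma is precisely the paper's Lemma~\ref{lemma:threshold_monotonicity} (threshold-monotonicity). One minor correction to your NP-membership check: the constraint $\sum_{k:\,t^{(k)}<t^{(j)}} m_k h^{(k)} > t^{(j)}$ should only be enforced for indices $j$ with $m_j>0$, since an unused item with a large threshold imposes no constraint on the sequence and your stated check would otherwise wrongly reject valid certificates.
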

\begin{proof}

 We will prove the theorem by reducing the Unbounded Subset-Sum Problem (USSP) to OptRep. USSP is NP-complete \cite{Lueker:TwoNP} and asks:
 
 Given positive integers $\left\{ a_{i}\right\}_{i \in [m]}$ and $A$, do there exist non-negative integers $\left\{ r_{i}\right\}_{i \in [m]}$ such that
$
\sum_{i=1}^m r_i \cdot a_{i}=A ?$

We now present our reduction:
      For each $j\in [m]$, define $h^{(j)}= a_j$ and $t^{(j)} = -1$ (in the payoff representation, this simply implies $p^{*(j)}= p_\Omega-1$). Lastly, set $\Delta=A-1$. We claim that the answer to the Unbounded Subset-Sum Problem is yes if and only if the total hazing of the optimal sequence computed by OptRep on input $G=(\{(h^{(j)},t^{(j)})\}_{j \in [m]}, \Delta)$ is $A$. The backwards direction is obvious. For the forward direction, say that the answer to USSP is yes, with coefficents $(r_j)_{j \in [m]}$. We can construct a finite hazing sequence $(h_i)_{i \in \{0\} \cup [R]}$ that has each $h^{(j)}$ repeating $r_j$ times, where $R= \sum_{j=1}^m r_j$. Notice that (\ref{eq:hstability}) is fulfilled since $t_i = -1$ for all $i \in \{0\} \cup [R]$. As for the final hazing threshold, we have $\sum_{i=0}^R h_i=  \sum_{j=1}^m r_j a_j  = A= \Delta+1>\Delta$, so the overall sequence is stable. Since the total hazing will be integral, any sequence with less total hazing will fail to meet the threshold set by $\Delta$. Hence, the total hazing of the optimal sequence computed by OptRep on input $G=(\{(h^{(j)},t^{(j)})\}_{j \in m}, \Delta)$ is $A$. 
  
    The reduction proves that OptRep is NP-hard. The corresponding decision problem (of whether a stable sequence with total hazing $H$ in the $\beta \rightarrow 1$ limit exists) is NP-complete, since, as we will see in Lemma \ref{lemma:threshold_monotonicity}, every sequence has a polynomial-size representation that does not change its total hazing cost, which can be computed in polynomial time. 
 \end{proof}
As shown by \cite[Theorem 4]{rationalNPhard}, USSP becomes \textit{strongly} NP-complete when the inputs are rational numbers rather than integers. Due to our reduction above, the same result readily translates to OptRep:
\begin{corollary} OptRep with rational inputs is strongly NP-hard, and the corresponding decision problem is strongly NP-complete.
\end{corollary}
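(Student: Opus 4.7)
The plan is to adapt the reduction from Theorem \ref{thm:nphard} essentially verbatim, but starting from the rational version of USSP shown strongly NP-complete in \cite[Theorem 4]{rationalNPhard}. Given a rational USSP instance with positive weights $\{a_j\}_{j \in [m]}$ and target $A$, I would build the OptRep instance by setting $h^{(j)} = a_j$, $t^{(j)} = -1$, and $\Delta = A - 1$. Every step of the original correctness argument then carries over: because each $t^{(j)}$ is strictly negative, the stability inequality (\ref{eq:hstability}) is satisfied at every position of any finite sequence, and the minimum total hazing achievable subject to $\sum_i h_i > \Delta$ equals $A$ if and only if some vector of non-negative integer counts $(r_j)_{j \in [m]}$ satisfies $\sum_j r_j a_j = A$.

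The delicate point is that strong NP-hardness requires the reduction not to inflate the magnitudes of the numbers appearing in the input. In our construction, each $a_j$ is copied directly into a coordinate $h^{(j)}$, and the only additional constants introduced are $-1$ and $A-1$. Hence the numerators and denominators of the OptRep instance are bounded by those of the underlying USSP instance plus a constant, so any algorithm solving rational OptRep in time polynomial in the unary (equivalently, value-based) encoding of its numbers would immediately yield the same for rational USSP, contradicting the strong NP-completeness of the latter.

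The main obstacle I anticipate is pinning down what ``rational input'' means for OptRep, which I would handle by relaxing $(h^{(j)}, t^{(j)}) \in \mathbb{Z}_+ \times \mathbb{Z}$ and $\Delta \in \mathbb{Z}_+$ in the problem statement to $\mathbb{Q}_+ \times \mathbb{Q}$ and $\mathbb{Q}_+$ respectively; the stability check and the minimization objective are purely additive and remain well-defined. Membership in NP of the decision version then goes through as in Theorem \ref{thm:nphard}: by the polynomial-size representation guaranteed by Lemma \ref{lemma:threshold_monotonicity}, verifying that a candidate sequence satisfies all stability thresholds and attains total hazing at most a given rational bound reduces to polynomially many additions and comparisons of rationals, establishing strong NP-completeness.
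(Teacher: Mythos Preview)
Your approach matches the paper's: it too simply cites \cite{rationalNPhard} for the strong NP-completeness of rational USSP and asserts that the reduction of Theorem~\ref{thm:nphard} carries over unchanged. However, by spelling the argument out you expose a step that does \emph{not} carry over. In the integer proof, the conclusion that the optimal hazing equals $A$ hinges on the line ``Since the total hazing will be integral, any sequence with less total hazing will fail to meet the threshold set by $\Delta$.'' With rational weights this integrality argument is unavailable: achievable sums $\sum_j r_j a_j$ can land strictly between $\Delta = A-1$ and $A$. Concretely, take $a_1 = 3/5$, $a_2 = 7/10$, $A = 13/10$; then USSP is a YES instance ($a_1+a_2=A$), yet the minimum total hazing exceeding $\Delta = 3/10$ is $3/5$, not $A$. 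So your biconditional ``the minimum total hazing equals $A$ iff USSP has a solution'' is false in the rational setting, and the hardness of the minimization problem is not established by this reduction as written.

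The paper's one-sentence proof does not explain how it sidesteps this issue. One natural repair is to target the decision variant phrased as ``does some stable sequence have total hazing exactly $H$,'' for which the equivalence with USSP holds without any integrality step (set $H=A$ and any positive rational $\Delta < A$); but you would then owe a separate argument linking that to hardness of the optimization version. The alternative of replacing $A-1$ by $A - 1/L$ for a common denominator $L$ restores the gap argument but risks introducing denominators exponential in $m$, which would undermine the ``strong'' qualifier you are trying to preserve.
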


Considering the similarities between OptRep and USSP presented in the proof of Theorem \ref{thm:nphard}, one might wonder if a reduction from OptRep to USSP is also possible. However, OptRep has additional challenges: instead of a set of items, we need to pick a sequence and ensure that every step of it meets the requirement in (\ref{eq:hstability}), as opposed to only a final capacity requirement. 

 Despite these challenges, we can exploit the structural properties of OptRep to develop techniques similar to those employed for knapsack problem variants such as USSP. One key observation is that the stability of a given sequence at any step solely depends on the hazing that has already occurred. This enables us to solve the problem with a dynamic program with a single state variable (the hazing so far), leading to:
 
 \begin{theorem} \label{thm:pseudo}
OptRep is solvable in pseudo-polynomial time. \end{theorem}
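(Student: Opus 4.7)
The plan is to design a single-variable dynamic program whose state is the cumulative hazing accumulated so far by a partial sequence. The crucial observation (which the paragraph immediately preceding the theorem highlights) is that whether an action $a^{(j)}$ can legally be appended to a stability-respecting partial sequence of length $k$ depends only on the prefix sum $\sum_{i=0}^{k-1} h_i$ compared against the threshold $t^{(j)}$, and not at all on the particular history of actions that produced that sum. Two partial sequences with identical cumulative hazings therefore admit identical continuations, so one scalar of state information suffices.

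Concretely, I would introduce a Boolean table $f(s)$ that is $1$ exactly when some stability-respecting partial sequence attains total hazing exactly $s$, with base case $f(0)=1$ for the empty sequence. The transition is: whenever $f(s)=1$ and $s > t^{(j)}$ for some $j \in [n]$, set $f(s+h^{(j)})=1$ and record a backpointer to $(s,j)$. The desired answer is the minimum value $s^{*} > \Delta$ with $f(s^{*}) = 1$, together with the sequence reconstructed by tracing backpointers from $s^{*}$ back to $0$.

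To make this pseudo-polynomial, the state space must be bounded by a polynomial in the numeric magnitudes. Writing $H = \max_{j} h^{(j)}$, I would argue that the optimum total hazing satisfies $H^{*} \le \Delta + H$: if the instance is feasible at all, some action $j_0$ must have $t^{(j_0)} < 0$ (otherwise no first move is legal), and repeatedly playing $j_0$ produces a stable sequence whose cumulative hazing first exceeds $\Delta$ at some value in the interval $(\Delta,\ \Delta + h^{(j_0)}]\subseteq(\Delta,\ \Delta+H]$. Since hazings are non-negative, every intermediate state on the path to any $H^{*}$-achieving solution lies in $[0, H^{*}] \subseteq [0, \Delta+H]$. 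It therefore suffices to compute $f(s)$ for $s \in \{0, 1, \ldots, \Delta+H\}$, processing each state against each of the $n$ actions, for a total running time of $O(n(\Delta+H))$, which is pseudo-polynomial in the input.

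The main obstacle I anticipate is being careful about what the scalar state $s$ can and cannot remember, and then translating the DP's numerical optimum back into an \emph{ordered} sequence. The exchangeability argument above removes the first worry. For the second, the backpointers do all the work: each backpointer stores not only the predecessor state but also the action $j$ played at that transition, which was certified legal against the contemporaneous prefix sum (because $s > t^{(j)}$ held when we relaxed into $s+h^{(j)}$). Reversing the chain of backpointers from $s^{*}$ to $0$ therefore yields a concrete hazing sequence whose every prefix sum satisfies Equation~(\ref{eq:hstability}) and whose total exceeds $\Delta$, which is exactly what OptRep asks for. A minor bookkeeping point is that actions with $h^{(j)}=0$ never enlarge the set of reachable hazing totals and so can be safely ignored by the DP without affecting the optimum.
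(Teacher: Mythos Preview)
Your proposal is correct and takes essentially the same approach as the paper: a single-state dynamic program over the cumulative hazing, bounded to $\{0,1,\ldots,\Delta+h_{\max}\}$, with $O(n)$ work per state. The only cosmetic difference is that the paper fills its table \emph{backward} (storing in $D[h]$ the minimum overshoot $s-\Delta-1$ attainable starting from hazing level $h$) rather than your forward reachability table, but the enabling observation, the state bound via the repeat-$j_0$ argument, and the resulting pseudo-polynomial running time are identical.
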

 \begin{proof}
     We prove the theorem by showing the correctness, time complexity, and space complexity of Algorithm \ref{alg:dp}.

     \noindent \textit{(a) Correctness:} Due to integral hazing costs, the minimum hazing cost for any stable sequence is $\Delta+1$. We prove (by strong induction on decreasing $h$) that once Algorithm \ref{alg:dp} is complete, $D[h]$ contains the minimum overshoot over this lower bound one can achieve starting from a hazing cost of $h$. As a base case, for any $h \geq \Delta +1$, the hazing goal has been met and the best we can do is an overshoot of $h-\Delta-1$ (Line \ref{line:basecase}). For any $h<\Delta+1$, at least one more action needs to be added to meet the hazing goal. In Line \ref{line:choose}, the algorithm chooses the next eligible action that minimizes the overshoot, where action $j\in[n]$ is eligible if the hazing so far, $h$, has met its threshold, $t^{(j)}$. Inductively assuming that $D[h']$ is correct for all $h'>h$ ensures that $D[h]$ is correctly set. Accordingly, $D[0]$ gives the overshoot of a sequence from the optimal equivalence class, and the list $A$ keeps track of the next steps to later reconstruct the sequence.

     \noindent \textit{(b) Time complexity:} Line \ref{line:basecase} takes constant time and is implemented $h_{\max}=\max_{j \in [n]}h^{(j)}$ times. Lines \ref{line:induc_start}-\ref{line:induc_end} take $O(n)$ time and are executed $O(\Delta)$ times. Since the sequence returned has at most $\Delta+1$ actions, Lines \ref{line:while_start}-\ref{line:while_end} take $O(\Delta)$ time. All other steps take constant time. Overall, the algorithm runs in $O(n\Delta+h_{max})$ time, which can be improved to $O(n\Delta)$ by feeding Line \ref{line:basecase} into the definition of $D[h]$ in Lines  \ref{line:induc_start}-\ref{line:induc_end}. This is polynomial in the input but, since input is represented in binary, exponential in the problem size.

     \noindent \textit{(c) Space complexity:} $D$ and $A$ both have $\Delta$ variable entries, giving the algorithm a space complexity of $O(\Delta)$. If we are not interested in reconstructing the representative sequence, the space complexity improves to $O(\min\{h_{max},\Delta\})$, since we may only store the most recent $h_{max}$ entries of $D$, which are sufficient for Line \ref{line:choose}. 
     \end{proof}
 
\begin{algorithm}[tb]
    \caption{Dynamic Program for OptRep}
    \label{alg:dp}
    \textbf{Input}: Final hazing goal: $\Delta>0 ,$ tuples of (hazing, thresholds) values for each action: $ \{(h^{(j)}, t^{(j)})\}_{j\in [n]} \subset \mathbb{Z}_+ \times \mathbb{Z}$\\
    \textbf{Output}: a list $S_{o}$ representing a sequence from the optimal equivalence class, along its hazing cost $H_o \in \mathbb{Z}_+$.
    
    \begin{algorithmic}[1] %[1] enables line numbers
        \State Let $\ell= \Delta + \max_{j\in[n]} h^{(j)}$. %\COMMENT{Length of the lookup table}
        \State $D \gets[\textit{inf}]^\ell$, \quad $A \gets[\textit{inf}]^\ell$ %\Comment{Indices of the optimal next action}
        \For{$h = \ell, \ell-1, \ldots, 1,0$}
        \InlineIf{$h \geq \Delta+1$}{ $D[h] \gets h-\Delta-1$} \label{line:basecase}
        \State \textbf{else} \label{line:induc_start}
       \State  \hspace{1em} $S_h \gets \{j \in [n] : h> t^{(j)} \}$ 
        \State \hspace{1em} $D[h] \gets \min_{j \in S_h} D[h+h^{(j)}]$  \label{line:choose}
        \State \hspace{1em} $A[h] \gets \argmin_{j \in S_h} D[h+h^{(j)}]$  \label{line:induc_end}
        \EndFor
        \State $H_o \gets D[0]+\Delta +1$,\quad $S_o \gets []$, \quad $i \gets 0$
        \While{$i \neq \textit{inf}$}\label{line:while_start}
        \State $S_o \gets S_o + [A[i]]$, \quad $i \gets A[i]$
        \EndWhile\label{line:while_end}
        \State \textbf{return} $S_o,H_o$
    \end{algorithmic}
\end{algorithm}

Alternatively, one can formulate OptRep as an Integer Linear Program (ILP), using binary variables for whether action $(j)$ is the $i$th action in the sequence. But the hazing period can have as many as $\Delta+1$ actions, resulting in $O(n\Delta)$ variables, exponentially many. Instead, we ask if one can impose some structure on the output sequence without losing generality.

One intuitive candidate is monotonicity: does the optimal equivalence class always have a sequence with non-increasing hazing costs (non-decreasing payoffs), considering the goal value is the highest payoff? The answer, it turns out, is no. Consider: $\Delta=10$ with $(h^{(1)},t^{(1)})=(5,-1),(h^{(2)},t^{(2)})=(6,4)$. The sequence $(h^{(1)},h^{(2)})$ is stable and achieves a hazing of 11. The minimum hazing from any non-increasing sequence is 15, by $(h^{(1)},h^{(1)},h^{(1)})$.

There is, however, a property that can be imposed on the sequences without ruling out the optimal equivalence class:

\begin{lemma}\label{lemma:threshold_monotonicity}
     Any stable sequence of the optimal equivalence class can be converted to a stable sequence in the same class where (a) all appearances of any action are adjacent and (b) actions appear in order $j_1,j_2,...,j_\ell$ where $t^{(j_1)} \leq t^{(j_2)} \leq \ldots \leq t^{(j_\ell)}$. We call such sequences $\textbf{threshold-monotonic}.$
\end{lemma}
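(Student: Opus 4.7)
The plan is to use an exchange (bubble-sort) argument. The key observation is that the total hazing $\sum_i h_i$ depends only on the multiset of actions used, not their order, so any reordering of a stable sequence keeps us in the same limit-utility equivalence class (by the characterization in the proof of Proposition \ref{proposition:opteq}). Moreover, swapping two adjacent positions $k$ and $k+1$ leaves every prefix sum $\sum_{i<k'} h_i$ unchanged for $k'\neq k+1$, so the stability condition (\ref{eq:hstability}) at all other positions, as well as the final condition $\sum_i h_i > \Delta$, is preserved automatically. The only stability constraints one needs to re-verify after a swap are those at positions $k$ and $k+1$.

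The core lemma I would state and prove is: if $t_k \geq t_{k+1}$, then swapping the actions at positions $k$ and $k+1$ preserves stability. Setting $S = \sum_{i<k} h_i$, the pre-swap stability at $k$ gives $S > t_k$. After swapping, position $k$ has threshold $t_{k+1}$, and $S > t_k \geq t_{k+1}$ gives the required inequality. Position $k+1$ now has threshold $t_k$, and since $h_{k+1} \geq 0$ (hazing costs are non-negative), we get $S + h_{k+1} \geq S > t_k$, as needed.

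Given this swap lemma, I would apply it repeatedly, as in bubble sort, to achieve (b): whenever two adjacent positions are out of threshold-order, swap them. The process terminates with a stable sequence whose thresholds are non-decreasing. For (a), note that within any maximal block of equal-threshold positions, the swap lemma applies with equality, so one can freely permute the entries in that block and in particular group identical actions together. Equivalently, I would run bubble sort with the lexicographic key (threshold, action index); every exchange is licensed by the same lemma. The resulting sequence is threshold-monotonic, has the same total hazing as the original, and hence lies in the same (optimal) equivalence class.

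The only real subtlety is the swap lemma itself, and in particular (i) the tie case $t_k = t_{k+1}$, which is exactly what enables part (a), and (ii) the need for $h_{k+1} \geq 0$ to re-establish stability at the new position $k+1$; both are immediate from the problem's definitions. Everything else is a standard bubble-sort argument, so there is no serious obstacle beyond writing out the swap carefully.
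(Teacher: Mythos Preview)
Your proof is correct. Both your argument and the paper's are exchange arguments resting on the same observation---moving a lower-threshold action earlier cannot break stability, because the moved action's threshold was already met and every displaced action only sees its prefix hazing sum grow---but they operate at different granularities. The paper first achieves (a) by, for each action $j$, relocating all of its occurrences to immediately after its first appearance (arguing that displaced actions see weakly larger prefix sums), and then achieves (b) by inductively sliding the block with the next-smallest threshold into place. You instead work purely with adjacent transpositions and a single bubble-sort pass on the lexicographic key $(\text{threshold},\text{action index})$, which yields (a) and (b) simultaneously. Your version is more local and arguably more elementary, since only two positions need re-checking per swap; the paper's block-move version avoids the termination bookkeeping of bubble sort and makes the ``prefix sums can only increase'' insight explicit in one shot. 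Either way the content is the same.
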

\begin{proof}
    For a given action $j$, say $i_j$ is its first appearance in the sequence. Move all appearances of $h^{(j)}$ to appear immediately after $i_j$, pushing all the actions that were previously there to later steps, without changing their order. The total hazing (and hence the equivalence class) has not changed. The stability of all $h^{(j)}$ follows from its stability at step $i_j$: once its threshold is met, this action will always be stable. All other actions are still stable since the hazing that precede them could have only increased. Repeating this for every $j \in [n]$ achieves (a). 

    Say the actions now appear in order $j_{1},j_2,\ldots, j_{n}$. Assume the first $k$ of them are the lowest $k$ threshold actions (with the base case $k=0$). Then the action with $(k+1)^\text{th}$ lowest threshold is $j_{z}$ for some $z \in \{k+1,\ldots n\}$. Move every occurrence of action $j_{z}$ to the starting point of action $j_{k+1}$, sliding every other action to the right. The occurrences of action $j_{z}$ will be stable by the stability of $j_{k+1}$ prior to the shift, which has an equal or greater threshold. All other actions are stable as their preceding hazing can only increase. Repeating this  inductively results in the overall sequence fulfilling (b). 
\end{proof}
 
 By Lemma \ref{lemma:threshold_monotonicity}, we can restrict our attention to threshold-monotonic sequences while solving OptRep, leading to an ILP with $O(n)$ variables, presented in Algorithm \ref{alg:ilp}. Note that ordering actions by thresholds takes an additional $O(n\log n)$ preprocessing time. The assumption $t^{(n)}\leq \Delta$ is WLOG, since any action $j$ with $t^{(j)} \geq \Delta +1$ cannot be used until the final hazing threshold is met. The ILP returns $[r^{(j)}]_{j \in [n]}$, the number of times each action is repeated, as this is sufficient to represent a threshold-monotonic sequence. As seen in our experiments in Section \ref{sec:exp},  Algo. \ref{alg:ilp} is efficient in practice.

\begin{algorithm}[tb]
    \caption{Integer Linear Program for OptRep}
    \label{alg:ilp}
    \textbf{Input}: Final hazing threshold: $\Delta>0 ,$ tuples of (hazing, thresholds) values for each action: $ \{(h^{(j)}, t^{(j)})\}_{j\in [n]} \subset \mathbb{Z}_+ \times \mathbb{Z}$, with $t^{(1)}\leq t^{(2)} \leq \ldots \leq t^{(n)}  \leq \Delta$\\
    \textbf{Parameters}: $(\forall j \in [n]): r^{(j)}$ indicating how many times action $j$ is repeated in the output sequence\\
    \textbf{Output}: Final values of $[r^{(j)}]_{j \in [n]}$
    \begin{algorithmic}[1] %[1] enables line numbers
        \State \textbf{if }{$t^{(1)} \geq 0$} \textbf{then raise} \textit{error} 
        \State \textbf{minimize} $\sum_{j=1}^n r^{(j)} h^{(j)}$
        \State \textbf{subject to}
        \State \quad $(\forall j \in [n] \setminus\{1\})$ $\sum_{j'=1}^{j-1} r^{(j')}h^{(j')} \geq t^{(j)}+1$
        \State \quad $\sum_{j=1}^{n} r^{(j)}h^{(j)} \geq \Delta+1$
        \State \textbf{return} $[r^{(j)}]_{j \in [n]}$
    \end{algorithmic}
\end{algorithm}

  Lemma \ref{lemma:threshold_monotonicity} can also improve the space complexity of Algorithm \ref{alg:dp}: each $A[h]$ can now store a size-$n$ vector corresponding to $[r_j]_{j\in[n]}$, with Line \ref{line:induc_end} replaced with $A[h][j^*] \leftarrow A[h][j^*]+1$ where $j^* = \argmin_{j\in S_h} D[h+h^{(j)}]$. Thus, it now suffices to store only $h_{max}$ entries of $A$ and $D$. Call Algorithm 1 with this modification (as well as those mentioned in the proof of Theorem \ref{thm:pseudo}(b-c)) Algorithm $1^*$. This leads to the below strengthening of Theorem \ref{thm:pseudo}:
 \begin{theorem}
     Algorithm $1^*$ solves OptRep in $O(n\Delta)$ time using $O(\min\{\Delta, n\cdot h_{max} \})$ space. The space complexity is $O(\min\{\Delta,  h_{max} \})$ for the corresponding decision problem. 
 \end{theorem}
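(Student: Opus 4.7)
The plan is to deduce this theorem by reusing the correctness of Algorithm \ref{alg:dp} (Theorem \ref{thm:pseudo}) together with a sliding-window analysis of the recurrence and the structural guarantee from Lemma \ref{lemma:threshold_monotonicity}.

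First, for correctness, I would argue that Algorithm $1^*$ computes the same minimum total hazing $D[0] + \Delta + 1$ as Algorithm \ref{alg:dp}, since it uses the identical DP recurrence on $D$. The only modification is that $A[h]$ now records, by strong induction on decreasing $h$, the multiset (count vector) of actions that witnesses $D[h]$: each update copies the optimal successor vector $A[h + h^{(j^*)}]$ and increments coordinate $j^*$. The returned sequence is then the actions recorded in $A[0]$ listed in non-decreasing threshold order, which by Lemma \ref{lemma:threshold_monotonicity} is stable and lies in the optimal limit-utility equivalence class.

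Next, for the time bound, each of the $O(\Delta)$ iterations of the main loop performs an $O(n)$ minimization over the eligible action set $S_h$ and an $O(n)$ copy of an $n$-vector into $A[h]$, so the work per iteration is $O(n)$ and the total running time is $O(n\Delta)$. Here I would reuse the optimization described in part (b) of the proof of Theorem \ref{thm:pseudo}, which folds Line \ref{line:basecase} into the recurrence and removes the additive $h_{\max}$ term.

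The main obstacle, and the step requiring genuine justification, is the space bound. The key observation is that the update for $D[h]$ and $A[h]$ reads only entries at indices in $(h, h + h_{\max}]$, so a circular buffer of length $h_{\max}$ is sufficient throughout the sweep; crucially, $A[0]$ is the very last entry written, so it is still resident when the loop terminates and can be returned. Each buffer slot for $A$ holds an $n$-vector, giving $O(n\, h_{\max})$ space for $A$ and $O(h_{\max})$ for $D$. Alternatively, one may run the unmodified Algorithm \ref{alg:dp} keeping only a single action pointer per slot in $O(\Delta)$ space, so taking the better of the two yields $O(\min\{\Delta,\, n \, h_{\max}\})$ for the search version. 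For the decision version, $A$ is not stored at all and the same dichotomy between sliding buffer and dense storage of $D$ gives $O(\min\{\Delta,\, h_{\max}\})$.
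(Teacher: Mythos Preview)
Your proposal is correct and follows essentially the same approach as the paper. The paper does not give a separate proof block for this theorem; instead, the justification is the short paragraph preceding it, which invokes Lemma~\ref{lemma:threshold_monotonicity} to replace $A[h]$ by an $n$-vector of action counts, notes that only the most recent $h_{\max}$ entries of $A$ and $D$ are ever read, and refers back to the time/space refinements in parts (b)--(c) of the proof of Theorem~\ref{thm:pseudo}---precisely the ingredients you spell out in more detail (DP recurrence unchanged for correctness, $O(n)$ minimization plus $O(n)$ vector copy per iteration for $O(n\Delta)$ time, circular buffer of width $h_{\max}$ with $n$-vectors versus the dense $O(\Delta)$ storage for the $\min$ in the space bound, and dropping $A$ entirely for the decision version).
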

While efficient in practice, both Algorithms \ref{alg:dp} and \ref{alg:ilp} are exponential in the worst case, which motivates the question of whether OptRep can be approximated with a Fully Polynomial Time Approximation Scheme (FPTAS), which runs in polynomial in the problem size and in $1/\varepsilon$, where $\varepsilon$ is the approximation ratio, given as an input. 
While we have  given a pseudopolynomial-time algorithm for our problem, doing so in general does not guarantee the existence of an FPTAS: for instance, knapsack with multiple constraints is pseudo-polynomial-time solvable but does not have an FPTAS unless $P=NP$ \cite{Magazine:MultiFPTAS}. 
However, for \mbox{OptRep} there is in fact an FPTAS, given as Algorithm \ref{alg:fptas}. The algorithm is a modification of Ibarra and Kim's original FPTAS for unbounded knapsack \shortcite{ibarra_kim_FPTAS}, with additions that address the differences between OptRep and unbounded knapsack.

\begin{figure*}[t]
  \includegraphics[scale=0.205]{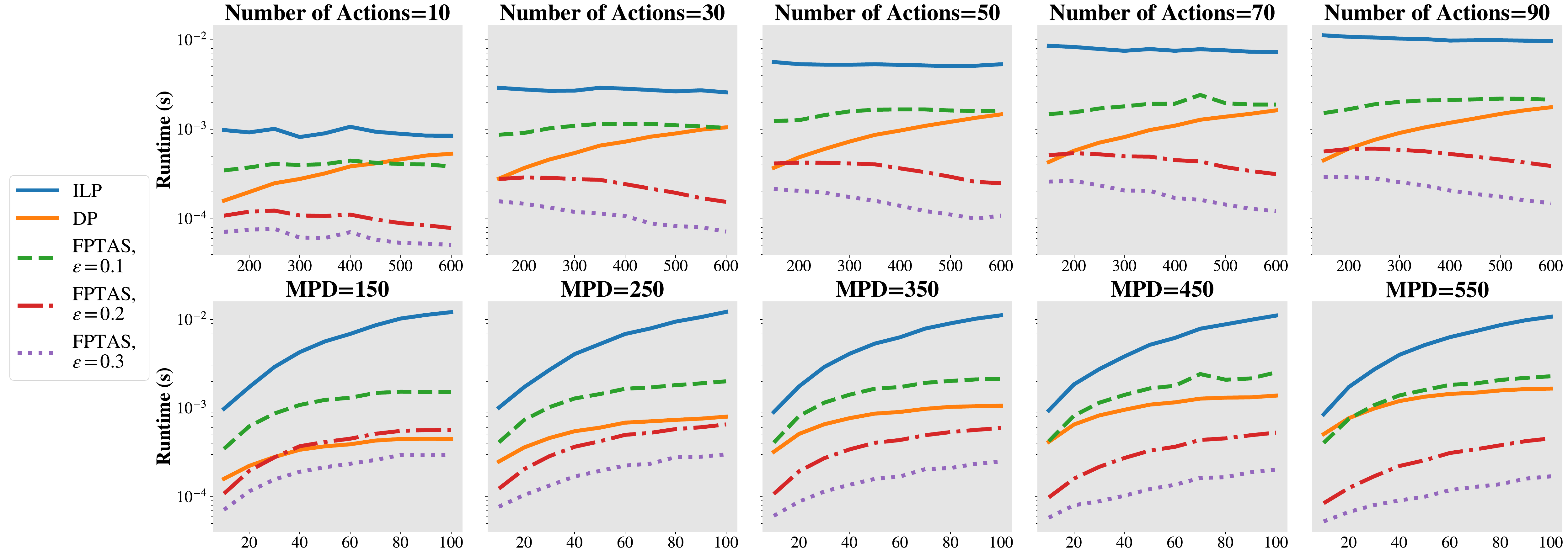}
  \caption{The semi-log plots of runtimes of Algorithms \ref{alg:dp},\ref{alg:ilp}, and \ref{alg:fptas} (with $\varepsilon=0.3,0.2,$ and $0.1$). Each data point is averaged over 5000 trials. \textbf{Top Row:} Fixed number of actions $(n)$; $x$-axis shows Maximum Payoff from Deviation (MPD). \textbf{Bottom Row:} Fixed MPD; $x$-axis shows $n$. }
  \label{fig:runtimes}
\end{figure*}

\begin{theorem}\label{thm:fptas}
    Algorithm \ref{alg:fptas} is an FPTAS for OptRep, and runs in $O(n\log n + \frac{n}{\varepsilon^2})$ time.
\end{theorem}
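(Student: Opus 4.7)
The plan is to verify that Algorithm~\ref{alg:fptas} meets the two requirements of an FPTAS—a $(1+\varepsilon)$-approximation to OPT and an $O(n\log n + n/\varepsilon^2)$ runtime—by adapting Ibarra and Kim's template for unbounded knapsack to accommodate OptRep's per-step stability constraints. The first step is to invoke Lemma~\ref{lemma:threshold_monotonicity}, sorting the actions by threshold in $O(n\log n)$ time and reducing the problem to choosing multiplicities $(r^{(j)})_{j\in[n]}$ minimizing $\sum_j r^{(j)} h^{(j)}$ subject to the $n$ prefix inequalities $\sum_{j'<j} r^{(j')} h^{(j')} > t^{(j)}$ and the terminal condition $\sum_j r^{(j)} h^{(j)} > \Delta$.

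The approximation-ratio argument follows the standard template: partition actions into \emph{large} (hazing cost above some $\Theta(\varepsilon \Delta / n)$) and \emph{small} ones, scale the hazing values by a factor $K = \Theta(\varepsilon \Delta / n)$, run a DP analogous to Algorithm~\ref{alg:dp} on the scaled large actions, and greedily use small actions to top up to the final threshold. Each scaled action contributes at most $K$ of rounding error, and since a near-optimal solution uses $O(1/\varepsilon)$ large-action occurrences, the total error is bounded by $O(\varepsilon \Delta) = O(\varepsilon \cdot \mathrm{OPT})$, where the last equality uses $\mathrm{OPT} > \Delta$. The runtime analysis combines the $O(n\log n)$ sort with the scaled DP and greedy completion, which a careful accounting shows sum to $O(n/\varepsilon^2)$: the scaled effective hazing budget shrinks to $O(1/\varepsilon)$ per action, and there are $n$ actions to process.

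The main obstacle, which I would address first, is that stability must be preserved \emph{exactly}—a sequence either is an equilibrium or is not—whereas scaling introduces genuine error at each prefix sum. The remedy is to inflate each threshold $t^{(j)}$ by the worst-case per-step scaling error before running the scaled DP, so that any solution feasible in the inflated scaled problem is truly stable once unscaled. Lemma~\ref{lemma:threshold_monotonicity} is indispensable here: it collapses the $\Delta$ raw prefix constraints into only $n$ effective ones (one per distinct action, at its first appearance), making a uniform inflation tractable and keeping the added hazing within the $(1+\varepsilon)$ budget. Verifying that the inflation plus scaling error together stay within $(1+\varepsilon)\cdot\mathrm{OPT}$ while the algorithm still runs in the claimed time is the technical heart of the proof.
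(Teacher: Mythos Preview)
Your proposal correctly identifies the Ibarra--Kim template and the role of Lemma~\ref{lemma:threshold_monotonicity}, but the mechanism you propose for preserving stability under scaling---inflating the thresholds---is not what Algorithm~\ref{alg:fptas} does, and it has a genuine gap. The issue is feasibility of the inflated problem: if you replace each $t^{(j)}$ by $t^{(j)}+I$, the optimal sequence may no longer satisfy the inflated constraints (it might barely clear the original ones), and you give no argument that a near-optimal solution to the inflated scaled problem exists. Padding the optimal with extra copies of low-threshold actions could repair one constraint, but those extra actions are themselves scaled with error, so a simple induction does not close. Your parameters are also inconsistent: with a large/small cutoff of $\Theta(\varepsilon\Delta/n)$ the optimal can contain $\Theta(n/\varepsilon)$ large-item occurrences, contradicting your $O(1/\varepsilon)$ claim and breaking both the error and the runtime bounds.

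The paper avoids inflation entirely. Algorithm~\ref{alg:fptas} indexes its table by \emph{scaled} hazing $k\in\{0,\dots,g\}$ with $g=\Theta(1/\varepsilon^2)$, but stores the \emph{actual} hazing $T[k]$; the stability test on Line~\ref{line:substability} is the exact condition $T[k]>t^{(j)}$, so every partial sequence the DP builds is genuinely stable with no slack needed. Crucially, the DP \emph{maximizes} $T[k]$ at each scaled level (not minimizes, as ``analogous to Algorithm~\ref{alg:dp}'' would suggest): Claim~\ref{claim:overshoot} then guarantees that at the scaled level $F^*$ of the optimal large-item prefix, the DP has found a sequence with at least as much true hazing, hence stable wherever the optimal was. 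The approximation ratio comes not from rounding error on thresholds but from Claim~\ref{claim:bounds}, which bounds $T[F^*]\le(1+\varepsilon/2)\delta F^*\le(1+\varepsilon/2)H^*$; one then tops up to $\Delta$ with copies of the single smallest-threshold small action $j_s$, contributing at most one extra $h^{(j_s)}\le(\varepsilon/3)\tilde H\le(2\varepsilon/3)H^*$. This maximize-then-select trick is the idea your plan is missing.
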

\begin{algorithm}[!t]
    \caption{FPTAS for OptRep}
    \label{alg:fptas}
    \textbf{Input}: Final hazing threshold: $\Delta>0 ,$ tuples of (hazing, thresholds) values for each action: $ \{(h^{(j)}, t^{(j)})\}_{j\in [n]} \subset \mathbb{Z}_+ \times \mathbb{Z}$, with $t^{(1)}\leq t^{(2)} \leq \ldots \leq t^{(n)}  \leq \Delta$\\
    \textbf{Output}: $\hat{H}, \hat{L}$, where $\hat{L}$ is a list of actions representing the output sequence, and $\hat{H}$ is the corresponding hazing cost. 
    
    \begin{algorithmic}[1] %[1] enables line numbers
        \State \textbf{if }{$t^{(1)} \geq 0$} \textbf{then raise} \textit{error}\label{line:error}
        \State $F \gets \{j \in [n]: t_i < 0, h_i \leq \Delta \}$ \label{line:eligible}
        \If {$F = \emptyset$}
        \State $\hat{L} \gets [\argmin_{j \in [n]: t_i < 0} h^{(j)}]$, \quad $\hat{H} \gets h^{(\hat{L}[0])}$\label{line:pre_no_halt}
        \State \textbf{return} $\hat{H}, \hat{L}$ \label{line:no_halt}
        \EndIf 
        \State $j^* \gets \min F$, \quad  $\Tilde{H}\gets k\cdot h^{(j^*)}$ where $k$ is the min. integer s.t. $k\cdot h^{(j^*) } > \Delta$ \label{line:tilde}
        \State $\delta \gets \Tilde{H}\left(\frac{\varepsilon}{3}\right)^2$, $g \gets \left\lfloor \frac{\Tilde{H}}{\delta} \right\rfloor =\left\lfloor \left( \frac{3}{\varepsilon} \right)^2 \right\rfloor $, $S \gets \emptyset$\label{line:normal_fact} 
        \State $T\gets [\textit{inf}]^{g+1}$, $L\gets [\textit{inf}]^{g+1}$, $T[0] \gets 0 $, $L[0] \gets [~]$ \label{line:prefor}
        \For{$j = 1,\ldots,n$} \label{line:outer}
        \InlineIf{$h^{(j)} \leq \frac{\varepsilon}{3} \Tilde{H}$}{$S \gets S + [j]$}\label{line:small}
        \State \textbf{else}
        \State  \hspace{1em} $f^{(j)} \gets \left \lfloor \frac{h^{(j)}}{\delta} \right \rfloor$ \label{line:normal_hazing}
        \State  \hspace{1em}\textbf{for }{$k=0,...,g-f^{(j)}$}\textbf{ do} \label{line:inner}
        \State \hspace{2em}\begin{minipage}[t]{19em}
         \textbf{if} $T[k] \neq \textit{inf}$ \textbf{and} $T[k] > t^{(j)}$ \textbf{and }($T[k+f^{(j)}]=\textit{inf}$ \textbf{or} $T[k]+h^{(j)}>T[k+f^{(j)}]$) \textbf{then}
               \end{minipage} \label{line:substability}
        \State\hspace{3em} $T[k+f^{(j)}] \gets  T[k]+h^{(j)}$\label{line:pre_fptas_endfor}
        \State \hspace{3em} $L[k+ f^{(j)}] \gets L[k]+[j] $\label{line:fptas_endfor}
        \EndFor
        \State $j_s \gets \argmin_{j \in S} t^{(j)}$
        \State $k^* \gets \argmin_{k\in \{0\} \cup [g]: T[k] > t^{(j_{s})}  } T[k]+ n_k\cdot  h^{(j_s)}$ where $n_k $ is the min. integer s.t. $T[k]+ n_k\cdot  h^{(j_s)} > \Delta$ \label{line:finstability}
        \State $\hat{H} \gets T[k^*]+n_{k^*}\cdot h^{(j_s)}$
        , \quad $\hat{L} \gets L[k^*]+ [j_s]^{n_{k^*}}$ \label{line:final_output}
        \State \textbf{return} $\hat{H}, \hat{L}$
        
    \end{algorithmic}
\end{algorithm}

\begin{proof} \textit{(a) Correctness:} To prove correctness, we first present several claims, the proofs of which are in Appendix \ref{appendix:thm3_proofs}

     \begin{claim} \label{claim:tilde}
         Say $H^*$ is the optimal hazing. If the algo. returns on Line \ref{line:no_halt}, $\hat{H}=H^*$. Else, $\Tilde{H} \geq H^* \geq \frac{1}{2} \Tilde{H}$ after Line \ref{line:tilde}.
     \end{claim}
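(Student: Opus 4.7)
The plan is to split into the two cases in the statement and handle each by a direct construction.

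\textbf{Case 1 (algorithm returns on Line \ref{line:no_halt}).} I would first observe that the stability inequality (\ref{eq:hstability}) at step $k=0$ reduces to $0>t_0$, so any stable sequence must begin with an action $j$ having $t^{(j)}<0$. When $F=\emptyset$, every such $j$ satisfies $h^{(j)}>\Delta$, so a single occurrence already meets the terminal constraint $\sum_i h_i>\Delta$ and is stable at every step (all partial sums are nonnegative while $t^{(j)}<0$). Because hazings are strictly positive integers, any longer sequence has strictly larger total hazing, so the optimum is a single action; choosing the one with minimum $h^{(j)}$ among those with $t^{(j)}<0$ (a nonempty set by the check on Line \ref{line:error}) yields $\hat{H}=H^*$, which is exactly what Line \ref{line:pre_no_halt} returns.

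\textbf{Case 2 (no early return).} Here $j^*\in F$, so $t^{(j^*)}<0$ and $h^{(j^*)}\leq\Delta$. For the upper bound $\Tilde{H}\geq H^*$, I would exhibit a concrete stable sequence of total hazing exactly $\Tilde{H}$: namely $k$ consecutive copies of action $j^*$. Stability at each step $m\in\{0,\dots,k-1\}$ follows from $t^{(j^*)}<0\leq \sum_{i=0}^{m-1}h_i$, and the terminal condition $k\,h^{(j^*)}>\Delta$ holds by the choice of $k$, so $H^*\leq \Tilde{H}$.

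For the lower bound $H^*\geq \tfrac{1}{2}\Tilde{H}$, I would use the minimality of $k$, which gives $(k-1)h^{(j^*)}\leq \Delta$, hence
\[
\Tilde{H}=k\,h^{(j^*)}\leq \Delta+h^{(j^*)}\leq 2\Delta,
\]
where the last inequality uses $h^{(j^*)}\leq \Delta$ (from $j^*\in F$). Since any stable sequence has integer total hazing strictly greater than $\Delta$, we have $H^*\geq \Delta+1>\Delta\geq \tfrac{1}{2}\Tilde{H}$, concluding the claim.

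There is no real obstacle; the claim is essentially a sanity check that $\Tilde{H}$, used on Line \ref{line:normal_fact} as the scale for the $\delta$-discretization, is already within a factor of two of $H^*$. The one subtle point is the degenerate case $F=\emptyset$, which is precisely why the algorithm branches on Line \ref{line:no_halt} and why the claim treats that branch separately.
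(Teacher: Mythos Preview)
Your proof is correct and follows essentially the same approach as the paper. The only cosmetic difference is in the lower-bound computation of Case~2: the paper phrases it as $H^*\geq \Delta+1>(k-1)h^{(j^*)}\geq (k-k/2)h^{(j^*)}=\tfrac{1}{2}\Tilde{H}$ using $k\geq 2$, whereas you instead bound $\Tilde{H}\leq \Delta+h^{(j^*)}\leq 2\Delta$ directly from $h^{(j^*)}\leq\Delta$; these are equivalent manipulations of the same two facts.
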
   
     Note that Line \ref{line:normal_fact} sets a `normalizing factor' of $\delta = \Tilde{H} \left(\frac{\varepsilon}{3}\right)^2$, later used for computing normalized hazing costs $f^{(j)} = \left\lfloor {h^{(j)}}/{\delta} \right\rfloor$ and a normalized upper bound for the optimal hazing $g= \lfloor {\Tilde{H}}/{\delta} \rfloor$.
     \begin{claim} \label{claim:bounds}
         Any $f^{(j)}$ set on Line \ref{line:normal_hazing} satisfies $f^{(j)} \cdot \delta \leq h^{(j)} \leq f^{(j)} \cdot \delta  \cdot \left(1+\frac{\varepsilon}{2}\right)$ 
     \end{claim}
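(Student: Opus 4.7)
The plan is direct arithmetic from the definition of $f^{(j)}$ and the guard condition that gates Line \ref{line:normal_hazing}. First I would dispatch the lower bound $f^{(j)}\cdot\delta \leq h^{(j)}$, which is immediate from $f^{(j)} = \lfloor h^{(j)}/\delta \rfloor$ and needs no further argument.

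For the upper bound, the key identity is the complementary floor inequality $h^{(j)} < (f^{(j)}+1)\delta = f^{(j)}\delta\bigl(1 + 1/f^{(j)}\bigr)$, which reduces the claim to showing $1/f^{(j)} \leq \varepsilon/2$. To bound $f^{(j)}$ from below, I observe that Line \ref{line:normal_hazing} is entered only through the else-branch of Line \ref{line:small}, so action $j$ must satisfy $h^{(j)} > (\varepsilon/3)\tilde{H}$. Substituting the value $\delta = \tilde{H}(\varepsilon/3)^2$ set on Line \ref{line:normal_fact} yields $h^{(j)}/\delta > 3/\varepsilon$. Applying the standard floor bound $\lfloor x \rfloor > x - 1$, I get $f^{(j)} > 3/\varepsilon - 1$. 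Under the standard assumption $\varepsilon \in (0,1]$ for an approximation parameter, $3/\varepsilon - 1 \geq 2/\varepsilon$, so $f^{(j)} > 2/\varepsilon$ and hence $1/f^{(j)} < \varepsilon/2$; substituting back into the floor inequality closes the upper bound.

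The only obstacle is a minor bookkeeping one: the strict inequality $h^{(j)}/\delta > 3/\varepsilon$ does not directly translate into the integer bound $f^{(j)} \geq 3/\varepsilon$ unless $3/\varepsilon$ happens to be an integer. I sidestep this by using the slack in $\lfloor x \rfloor > x-1$ to extract the slightly weaker bound $f^{(j)} > 3/\varepsilon - 1$, which is already strong enough after invoking $\varepsilon \leq 1$. This same $\varepsilon \leq 1$ convention will also be needed later in the correctness analysis of Algorithm \ref{alg:fptas}, so imposing it here is without loss.
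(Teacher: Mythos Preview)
Your proposal is correct and is essentially identical to the paper's proof: both establish the lower bound trivially from the floor definition, use the else-branch guard to get $h^{(j)}/\delta > 3/\varepsilon$, apply $\lfloor x\rfloor > x-1$ to obtain $f^{(j)} > 3/\varepsilon - 1 = (3-\varepsilon)/\varepsilon$, and then invoke $\varepsilon\leq 1$ to conclude $1/f^{(j)} < \varepsilon/2$. The only cosmetic difference is that the paper writes the intermediate bound as $\varepsilon/(3-\varepsilon)\leq \varepsilon/2$ whereas you rewrite $(3-\varepsilon)/\varepsilon \geq 2/\varepsilon$ first; these are the same inequality.
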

\begin{claim} \label{claim:agreement}
     At any point during the execution starting from Line \ref{line:prefor}, for any $k \in \{0,...,g\}$, either $T[k]=L[k]=\textit{inf}$ or $L[k]$ contains a list $[j_0,j_1,\ldots, j_{\ell}]$ with $\sum_{i=0}^\ell f^{(j_i)}=k$ and  $\sum_{i=0}^\ell h^{(j_i)}=T[k]$.
\end{claim}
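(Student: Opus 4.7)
The plan is to prove Claim \ref{claim:agreement} by induction on the number of writes to the arrays $T$ and $L$ performed during the execution beginning at Line \ref{line:prefor}. The invariant to maintain is exactly the statement of the claim: for every index $k \in \{0,\ldots,g\}$, either $T[k]=L[k]=\textit{inf}$, or $L[k]=[j_0,\ldots,j_\ell]$ with $\sum_{i=0}^{\ell} f^{(j_i)} = k$ and $\sum_{i=0}^{\ell} h^{(j_i)} = T[k]$.

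For the base case, I would observe that Line \ref{line:prefor} initializes all entries of $T$ and $L$ to $\textit{inf}$ except $T[0] = 0$ and $L[0] = [\,]$. Under the empty-sum convention, the list $L[0]$ has $\sum f^{(j_i)} = 0$ matching its index, and $\sum h^{(j_i)} = 0 = T[0]$, so the invariant holds initially.

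For the inductive step, I would note that the only lines that ever modify $T$ or $L$ are Lines \ref{line:pre_fptas_endfor} and \ref{line:fptas_endfor}, and both fire only when the guard on Line \ref{line:substability} is satisfied. That guard explicitly requires $T[k] \neq \textit{inf}$, so by the inductive hypothesis $L[k]=[j_0,\ldots,j_\ell]$ is a list with $\sum_{i=0}^{\ell} f^{(j_i)} = k$ and $\sum_{i=0}^{\ell} h^{(j_i)} = T[k]$. The update then sets $L[k+f^{(j)}] \gets L[k]+[j]$ and $T[k+f^{(j)}] \gets T[k]+h^{(j)}$. Appending $j$ increments the normalized-hazing sum by exactly $f^{(j)}$, giving $k+f^{(j)}$, which matches the new index; and it increments the true-hazing sum by exactly $h^{(j)}$, matching the new $T$-value. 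Any entries not touched by this write keep their previous contents and therefore still satisfy the invariant. This closes the induction.

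The argument is essentially bookkeeping, so the main thing to get right is simply pinpointing that (i) the guard on Line \ref{line:substability} ensures $T[k]\neq\textit{inf}$ before $L[k]$ is read, so the inductive hypothesis applies to the source cell, and (ii) no other line in the algorithm mutates $T$ or $L$ between the initialization at Line \ref{line:prefor} and the final output at Line \ref{line:final_output}. I do not anticipate any real obstacle beyond stating the invariant precisely.
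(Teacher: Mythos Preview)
Your proposal is correct and follows essentially the same approach as the paper: an induction over the writes to $T$ and $L$, with Line \ref{line:prefor} as the base case and the guard $T[k]\neq\textit{inf}$ on Line \ref{line:substability} supplying the inductive hypothesis for the source cell before the update on Lines \ref{line:pre_fptas_endfor}--\ref{line:fptas_endfor}. Your version is slightly more explicit about the empty-sum convention and about the fact that no other line mutates $T$ or $L$, but the argument is otherwise identical.
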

\begin{claim} \label{claim:overshoot}
    Say $(h_i)_{i\in\{0\}\cup [\ell]}$ is any stable threshold-monotonic \textit{sub}sequence (i.e., it fulfills (\ref{eq:hstability}) for all $k \leq \ell$, but does not necessarily meet the final hazing goal $\Delta$), where the order of actions is consistent with the labeling, with $h_i > \frac{\varepsilon}{3}\Tilde{H}$ for all $i$ and total hazing at most $\Tilde{H}$. Defining $f_i = \left \lfloor {h_i}/{g} \right \rfloor$ and $F= \sum_{i=0}^\ell f_i$, we must have $T[F] \geq \sum_{i=0}^\ell h_i$ in the end of the execution.
\end{claim}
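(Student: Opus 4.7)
The plan is to prove Claim \ref{claim:overshoot} by induction on $m$, strengthening the statement to track the state of $T$ as the outer loop progresses. Specifically, I would prove the invariant: after the outer loop completes its iteration $j = j_m$, the entry $T[F_m]$ satisfies $T[F_m] \geq \sum_{i=0}^{m} h_i$, where $F_m \defeq \sum_{i=0}^{m} f_i$. Taking $m = \ell$ then yields the claim.

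Two preliminary facts drive the induction. First, entries of $T$ are monotonically non-decreasing over the execution, because the update on Lines \ref{line:substability}--\ref{line:pre_fptas_endfor} is gated on strict improvement. Second, because the inputs are threshold-sorted ($t^{(1)} \leq \ldots \leq t^{(n)}$) and the subsequence is threshold-monotonic, the labels satisfy $j_0 \leq j_1 \leq \ldots \leq j_\ell$, so each $j_m$ is handled at or after the outer iteration of $j_{m-1}$. If $j_m > j_{m-1}$, the bound $T[F_{m-1}] \geq \sum_{i=0}^{m-1} h_i$ persists by monotonicity into the start of iteration $j_m$. If $j_m = j_{m-1}$, both occurrences are handled within the same outer iteration, and the inner loop's increasing-$k$ traversal chains the updates --- the standard unbounded-knapsack ordering.

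For the inductive step, I would verify that during the inner loop for $j_m$, when $k = F_{m-1}$, the guard on Line \ref{line:substability} fires: the threshold check $T[F_{m-1}] > t^{(j_m)}$ follows from the stability condition $\sum_{i=0}^{m-1} h_i > t_m$ combined with the inductive hypothesis, while the improvement check either succeeds (so Line \ref{line:pre_fptas_endfor} sets $T[F_m] \geq T[F_{m-1}] + h^{(j_m)} \geq \sum_{i=0}^{m} h_i$) or fails because $T[F_m]$ already exceeds this quantity, in which case the desired inequality holds trivially.

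The main obstacle will be handling repeated actions carefully --- showing that the unbounded-knapsack in-place update pattern correctly chains multiple consecutive uses of the same action within one outer iteration, so the induction passes cleanly through runs of identical $j_m$'s. Two minor side conditions also need checking: (i) that $F_\ell \leq g$, so all indices stay within the DP table, which follows from $\sum_i h_i \leq \tilde{H}$ via $F_\ell \delta \leq \sum_i h_i \leq \tilde{H} = g\delta$; and (ii) that every $j_m$ is actually processed by the DP branch rather than diverted to $S$ on Line \ref{line:small}, which is guaranteed by the hypothesis $h_i > \frac{\varepsilon}{3}\tilde{H}$.
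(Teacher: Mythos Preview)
Your proposal is correct and follows essentially the same approach as the paper. The paper inducts on the outer-loop index $z$ (grouping by action, with repetition counts $r_j$) rather than on the subsequence position $m$, and it makes the ``chaining within one outer iteration'' step explicit by introducing intermediate snapshots $T_a^j$ of the DP table after each relevant inner-loop step --- precisely the obstacle you flag --- but the underlying argument (monotonicity of $T$, the threshold check via the stability inequality, and the increasing-$k$ unbounded-knapsack update order) is the same.
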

We now complete the proof of the theorem: Say $\left(h^\star_i\right)_{i \in \{0\}\cup[\ell]}$ is a member of the optimal equivalence class ($\sum_{i=0}^\ell h^\star_i = H^*)$, with corresponding normalized hazing costs $\left(f^\star_i\right)_{i \in \{0\} \cup [\ell]}$ and $F^* =\sum_{i=0}^\ell f^\star_i$. By Lemma \ref{lemma:threshold_monotonicity}, we can assume WLOG that $\left(h^\star_i\right)_{i \in \{0\} \cup [\ell]}$ is threshold-monotonic, with respect to the current labeling of the actions. 

Say $\hat{H}=T[k^*] +n_{k^*}\cdot h^{j_s}$ and $\hat{L}=L[k^*] +[j_s]^{n_{k^*}}$ are the final outputs of the algorithm, as set in Line \ref{line:final_output}, and $S$ is the set of actions $j$ with $h^{(j)} \leq ({\varepsilon}/{3})\Tilde{H}$, as filled in Line \ref{line:small}. By Claim \ref{claim:agreement}, the total hazing of the sequence represented by $\hat{L}$ is indeed $\hat{H}$. The stability of every step in $L[k^*]$ in ensured by the if statement in Line \ref{line:substability}. The stability of the final $n_{k^*}$ steps is ensured by the $T[k^*]>t^{(j_s)}$ condition in Line \ref{line:finstability}.

\noindent \textbf{Case 1: } $\left(h^\star_i\right)_{i \in \{0\}\cup[\ell]}$ has no element from $S$. Then it fulfills the assumptions of Claim \ref{claim:overshoot} (as $H^* \leq \Tilde{H}$ by Claim \ref{claim:tilde}),  implying $T \left[F^*\right] \geq H^* \geq \Delta + 1 > t^{(j_s)}$. Since $H^* \leq \Tilde{H}$, we have $F^* \leq g$, so $T[F^*]$ is one of the options considered in Line \ref{line:finstability} (with $n_{F^*}=0)$, implying $\hat{H} \leq T[F^*]$. By Claim \ref{claim:agreement}, we get $\sum_{j \in L[F^*]}f^{(j)} = F^* = \sum_{i=0}^\ell f^\star_i$. Using Claim \ref{claim:bounds} twice:
\begin{align*}
    T[F^*]  =\sum_{j \in L[F^*]}h^{(j)} &\leq  \left(1+\frac{\varepsilon}{2}\right)\delta F^* \\&\leq   \left(1+\frac{\varepsilon}{2}\right)\left( \sum_{i=0}^{\ell}h^\star_i\right)
\end{align*}
Hence, $\hat{H} \leq  T[F^*] \leq \left(1+\frac{\varepsilon}{2}\right) H^*$, fulfilling the bound.

\noindent \textbf{Case 2: } $\left(h^\star_i\right)_{i \in \{0\} \cup [\ell]}$ does have an action from $S$.  Say $i^*$ is the first index where such an action appears and $F' = \sum_{i=0}^{i^*-1}f^\star_i$. Then $\left(h^\star_i\right)_{i \in\{0\} \cup [i^*-1]}$ satisfies the assumptions of Claim \ref{claim:overshoot}, so we have $T \left[F'\right] \geq \sum_{i=0}^{i^*-1}h^\star_i > t^\star_{i^*} \geq t^{(j_s)}$, where the inequalities follow from  Claim \ref{claim:overshoot}, stability of $h_{i^*}$ and the minimality of $t^{(j_s)}$, respectively. Then $T[F']+n_{F'} h^{(j_s)}$ is considered in Line  \ref{line:finstability}, implying $\hat{H} \leq T[F']+n_{F'} h^{(j_s)}$. Say $n_{F'}\neq 0$: since $n_{F'}$ is the smallest integer such that $T[F'] + n_{F'} h^{(j_s)}> \Delta$, we have:
\begin{align*}
    \hat{H} &\leq T[F']+(n_{F'}-1)h^{(j_s)} + h^{(j_s)} \leq \Delta + h^{(j_s)}\\& < H^* + \frac{\varepsilon}{3} \Tilde{H} \leq H^* + \frac{2\varepsilon}{3} H^* \leq (1+\varepsilon)H^*
\end{align*}
fulfilling the bound. Otherwise, say $n_{F'}=0$, implying $T[F']>\Delta$. By the same argument as Case 1, using Claims \ref{claim:agreement} and \ref{claim:bounds}, we have $T[F'] \leq \left(1+\frac{\epsilon}{2}\right) \left( \sum_{i=0}^{i^*-1}h_i\right)$. However, we have $ \sum_{i=0}^{i^*-1}h_i \leq \Delta < H^*$ as otherwise item $i^*$ would be unnecessary to reach the goal value. This gives us:
$ \hat{H} \leq T[F'] \leq \left(1+\frac{\varepsilon}{2}\right) H^*$,
fulfilling the bounds. 

Overall, we have shown that in all cases, the output $\hat{L}$ contains a stable sequence with total hazing cost $\hat{H}< (1+\varepsilon) H^*$.

\noindent (b) \textit{Time complexity:} Ordering actions by thresholds takes $O(n \log n)$ time. Lines \ref{line:error}-\ref{line:prefor} take $O(n)$ time. The outer loop in Line \ref{line:outer} gets executed $O(n)$ times, each running a constant number of operations plus the inner loop in Line \ref{line:inner}, which gets executed at most $g=O(1/\varepsilon^2)$ times with constant time per iteration. Thus, Lines \ref{line:outer}-\ref{line:fptas_endfor} take $O\left(\frac{n}{\varepsilon^2}\right)$ time. Lines 30, 31 take $O(n)$ and $O(g)$ time, respectively. Overall, Algo. \ref{alg:fptas} takes $O\left(n\log n + \frac{n}{\varepsilon^2}\right)$ time, polynomial in both $n$ and $\frac{1}{\varepsilon}$. 
\end{proof}
\section{Experiments} \label{sec:exp}
We present the semi-log plots for the runtimes of Algorithms \ref{alg:dp}-\ref{alg:fptas} in Figure \ref{fig:runtimes}. Given the number of actions $n$ and a positive integer Maximum Payoff of Deviation ($\text{MPD}$), we generated a game by (for each action $j\in[n]$) uniformly choosing  $p^{(j)}$ from $\{i \in \mathbb{Z}: 0 \leq i \leq 30 \}$ and uniformly choosing  $p^{*(j)}$ from $\{i \in \mathbb{Z}: p^{(j)} \leq i \leq \text{MPD}\}$.\footnote{Imposing  $p^{*(j)} \geq p^{(j)}$ ensures $\Delta >0$ for each game.} As expected, the runtimes of all algorithms increase with increasing $n$, and the runtime of FPTAS increases with decreasing $\varepsilon$. Similarly, consistent with our runtime analysis, Algorithm \ref{alg:dp} (DP) increases with increasing $\text{MPD}$, since $\text{MPD}$ and $\Delta$ are positively correlated, whereas the other algorithms do not exhibit such an increase, as their runtimes are independent of the actual payoffs.\footnote{In fact, as seen in Fig.\ref{fig:runtimes}(top), the runtime for Algo. \ref{alg:fptas} sometimes decreases with increasing $\text{MPD}$. This is likely because increasing $\text{MPD}$ increases the upper bound for the deviation payoffs but not for the cooperation payoffs, resulting in $\Delta$ increasing while $h^{(j)}$s stay in the same range. Thus, more actions are placed in $S$ and skip the inner loop in Line \ref{line:inner}, resulting in less computation by the FPTAS.} While the DP consistently outperforms the ILP in the plots, this trend is naturally reversed for high $\Delta$, which we demonstrate by rerunning the same runtime experiment for these algorithms for larger values of MPD. Our results are presented in Figure \ref{fig:highdelta}.
    
Notice that for $n=30$, we see that the ILP (the runtime of which is robust to increases in the payoffs) starts to outperform the DP (the runtime of which steadily increases with increasing MPD) when the MPD reaches approximately 1500. The figure demonstrates the better performance of the ILP in the high $\Delta$ limit, which corresponds to games requiring longer (in terms of the number of rounds) hazing periods before the goal value is reached. 

    \begin{figure}[t]
        \centering
        \includegraphics[scale=0.2]{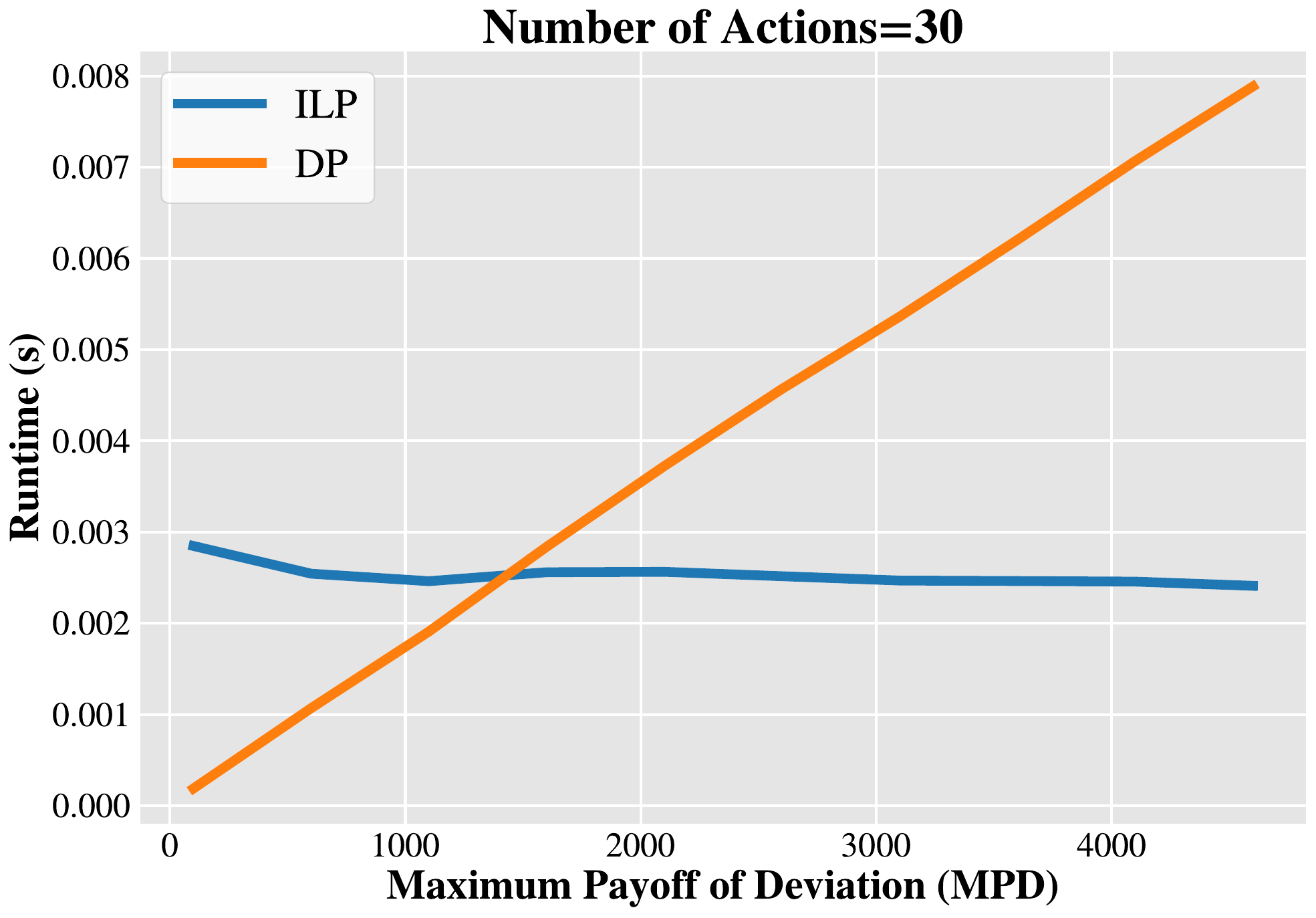}
        \caption{Runtimes of Algorithms \ref{alg:dp} and \ref{alg:ilp} for a fixed number of actions (30) as a function of Maximum Payoff of Deviation (MDP). Each data point is averaged over 5000 trials.}
        \label{fig:highdelta}
    \end{figure}

\section{Future Work} \label{sec:conc}
As mentioned in Section \ref{sec:comp_prob}, Algorithm \ref{alg:fptas} is inspired by the original FPTAS for the unbounded knapsack problem \cite{ibarra_kim_FPTAS}. Since then, faster schemes for the same problem have been developed \cite{Lawler:FPTAS,Jansen:FPTAS}. It is possible that similar modifications can be made to these algorithms to enable them to solve OptRep, improving the runtime of its FPTAS.  

A natural next step following our work is considering asymmetric games and strategies. While some of our results generalize to this setting, there are important differences. For example, goal values must be generalized to goal sequences to maximize payoff (e.g., if $(C_2,C_2)$ from Table \ref{table:game} instead had payoff (6,6), it would be better to forever alternate between $(D,C_2)$ and $(C_2,D)$); hence, computing the “optimal goal sequence” is a separate computational problem on its own. A two-variable generalization of Algorithm \ref{alg:dp} can compute sequences reaching a goal sequence that maximizes the total payoff, but fails for any other goal sequence. However, since fairness concerns come into play with asymmetric payoffs, maximizing the total payoff may not be sufficient, requiring novel algorithmic techniques.  

Another possible extension is to consider {\em Bayesian} games \cite{bayesian}, where agents have private information about how they value certain outcomes.  In such games, play might differ depending on an agent's type, and we may even have defection and restarting on the path of play in equilibrium.  For example, if a small fraction of the population obtains great benefit from deviating in the first round, it may be better to tolerate those few agents repeatedly taking advantage for one round and then finding another partner, than to add significant hazing for the entire population to prevent this.  In this context, {\em partial} anonymity could also be of interest, for example where one can {\em choose} to reveal one's history to a new partner, for example to show that one behaved properly in a previous relationship but the partner unfairly defected.

%% The file named.bst is a bibliography style file for BibTeX 0.99c

\section*{Acknowledgments}
We thank the Cooperative AI Foundation, Polaris Ventures
(formerly the Center for Emerging Risk Research), and Jaan
Tallinn’s donor-advised fund at Founders Pledge for financial support. We are grateful to Matt Rognlie for his valuable contributions to the conceptualization of and preliminary work on this project. We thank Anupam Gupta for his helpful comments on FPTAS literature. Lastly, we thank the members of Foundations of Cooperative AI Lab (FOCAL) for comments and suggestions on earlier drafts of this work.

\bibliographystyle{named}
\bibliography{refs}
\appendix
\clearpage
\section{Missing Proofs}
\subsection{Proof of Claim \ref{claim:closed} and Example for Non-Uniqueness}\label{appendix:lemma1_proofs}
Below we prove Claim \ref{claim:closed} from Proposition \ref{prop:opt_existence}. For an example demonstrating that the optimal sequence is not necessarily unique, see Example \ref{eg:nonunique} below.

        Given a game $G=(\{p^{(j)},p^{*(j)}\}_{j\in [n]}, \beta)$, say $T$ is the set of achievable total discounted utilities. Fix some $(t_k)_{k \in \mathbb{N}} \in T^\mathbb{N}$ that converges to a limit $t$. To prove closedness, we must show that $t$ is achievable. For each $k\in \mathbb{N}$, there exists a stable sequence $(p^k_i)_{i\in \mathbb{N}}$ with total discounted utility $t_k$. Given this sequence of sequences  $\left((p^k_i)_{i\in \mathbb{N}}\right)_{k \in \mathbb{N}}$, we construct a new sequence $(p'_i)_{i \in \mathbb{N}}$ in the following iterative manner:
        \begin{itemize}
            \item \textbf{Step 0:} Since there are finitely many choices for $p^k_0$ (the 0th element of each sequence), there needs to be at least one $p \in \{p^{(j)}\}_{j \in [n]}$ such that $p^k_0=p$ for infinitely many $k$. Choose the smallest $j^* \in [n]$ such that $p^{(j^*)}$ fulfills this condition, and set $p'_0= p^{(j^*)}$. Let $\left((p^{n_0(k)}_i)_{i\in \mathbb{N}}\right)_{k \in \mathbb{N}} \subset \left((p^k_i)_{i\in \mathbb{N}}\right)_{k \in \mathbb{N}}$ be the subsequence of sequences such that $p^{n_0(k)}_0= p^{(j^*)}$ for all $k$. 
            \item  \textbf{Step }$\boldsymbol{\ell>0:}$ Note that after step $\ell-1$, we are given a sequence of sequences $\left((p^{n_{\ell-1}(k)}_i)_{i\in \mathbb{N}}\right)_{k \in \mathbb{N}} $ where each $(p^{n_{\ell-1}(k)}_i)_{i\in \mathbb{N}}$ agree on steps $i=0,...,\ell-1$. We now repeat the same process as Step 0, picking the smallest $j^*$ such that $p^{n_{\ell-1}(k)}_\ell= p^{(j^*)}$ for infinitely many $k$. We then set $p'_\ell= p^{(j^*)}$ and let $\left((p^{n_{\ell}(k)}_i)_{i\in \mathbb{N}}\right)_{k \in \mathbb{N}} \subset \left((p^{n_{\ell-1}(k)}_i)_{i\in \mathbb{N}}\right)_{k \in \mathbb{N}}  $ be the subsequence of sequences such that $p^{n_{\ell}(k)}_\ell= p^{(j^*)}$ for all $k_\ell$. 
        \end{itemize}
        We argue (a) $(p'_i)_{i\in \mathbb{N}}$ is stable and (b) $\sum_{i=0}^\infty \beta^i p'_i=t$. For (b), assume for the sake of contradiction that $\sum_{i=0}^\infty \beta^i p'_i=t + \delta$ for some $\delta \neq 0$. Let $p_{amax} = \max_{j \in [n]}\left|p^{(j)}\right|$ and choose $\ell$ large enough such that $2 \beta^{\ell+1}\frac{p_{amax}}{1-\beta}< \delta$ and let $\varepsilon = \delta - 2 \beta^{\ell+1}\frac{p_{amax}}{1-\beta}$. Let $\left((p^{n_\ell(k)}_i)_{i\in \mathbb{N}}\right)_{k \in \mathbb{N}}$ be the sequence of sequences achieved after step $\ell$ above, and $(t_{n_\ell(k)})_{k \in \mathbb{N}}$ the corresponding total utilities. Since $\left((p^{n_\ell(k)}_i)_{i\in \mathbb{N}}\right)_{k \in \mathbb{N}}$ is a subsequence of $\left((p^{k}_i)_{i\in \mathbb{N}}\right)_{k \in \mathbb{N}}$, then $(t_{n_\ell(k)})_{k \in \mathbb{N}}$ is a subsequence of $(t_k)_{k \in \mathbb{N}}$, hence it also converges to $t$. Thus, we can choose $k'$ large enough such that $|t_{n_\ell(k')}-t| < \varepsilon$. Since $p'_i= p^{n_\ell(k')}_i$ for $i = 0,...,\ell$, we have:
        \begin{align*}
            \delta &= \sum_{i=0}^\infty \beta^i p'_i - t =\left( \sum_{i=0}^\infty \beta^i p'_i -t_{n_\ell(k')}\right)+( t_{n_\ell(k')}- t)\\
            & = \left( \sum_{i=\ell+1}^\infty \beta^i (p'_i -p^{n_\ell(k')}_i)\right)+( t_{n_\ell(k')}- t)\\
            & < \left( \sum_{i=\ell+1}^\infty \beta^i 2p_{amax}\right)+\varepsilon 
            = 2\beta^{\ell+1} \frac{p_{amax}}{1-\beta} +\varepsilon=\delta
        \end{align*}
        which is a contradiction since it implies $\delta < \delta$. Hence, we have indeed have $\sum_{i=0}^\infty \beta^i p'_i=t$. We now show (a), i.e., that $(p'_i)_{i \in \mathbb{N}}$ is indeed stable. For the sake of contradiction, assume it is not stable, which implies that there exists same $\ell  \in \mathbb{N}$ such that:
        \begin{align*}
        \sum_{i=0}^{\ell-1} \beta^i p'_i + \beta^\ell p'^*_\ell > (1-\beta^{\ell+1}) \sum_{i=0}^\infty \beta^i p'_i = (1-\beta^{\ell+1}) t
        \end{align*}
        Define $\delta' =  \sum_{i=0}^{\ell-1} \beta^i p'_i + \beta^\ell p'^*_\ell - (1-\beta^{\ell+1}) t$ and let $\left((p^{n_\ell(k)}_i)_{i\in \mathbb{N}}\right)_{k \in \mathbb{N}}$ be the sequence of sequences achieved after step $\ell$ above, and $(t_{n_\ell(k)})_{k \in \mathbb{N}}$ the corresponding total utilities. Since $(t_{n_\ell(k)})_{k \in \mathbb{N}}$ converges to $t$, we can choose $k'$ large enough such that $|t- t_{n_\ell(k')}| < \frac{\delta'}{(1-\beta^{\ell+1})}$,
        but then:
        \begin{align*}
            \sum_{i=0}^{\ell-1 } &\beta^ip^{n_\ell(k')}_i +\beta^\ell p^{* n_\ell(k')}_\ell =   \sum_{i=0}^{\ell-1} \beta^i p'_i + \beta^\ell p'^*_\ell \\&= \delta' + (1-\beta^{\ell+1}) t \\
            &=  \delta'+ (1-\beta^{\ell+1}) (t- t_{n_\ell(k')})+ (1-\beta^{\ell+1}) t_{n_\ell(k')}\\
            &\geq  \delta'- (1-\beta^{\ell+1}) |t- t_{n_\ell(k')}|+ (1-\beta^{\ell+1}) t_{n_\ell(k')}\\
            & >  \delta'- \frac{(1-\beta^{\ell+1}) \delta'}{(1-\beta^{\ell+1})}+ (1-\beta^{\ell+1}) t_{n_\ell(k')}\\
            &=  (1-\beta^{\ell+1}) t_{n_\ell(k')}
        \end{align*}
        which contradicts the stability of $(p^{n_\ell(k')}_i)_{i\in \mathbb{N}}$ at step $\ell$. Since we started from a sequence of stable sequences, this is a contradiction, implying that $(p'_i)_{i \in \mathbb{N}}$ is indeed stable. Hence, $t$ is indeed an achievable total discounted utility, implying $t \in T$ and proving the closedness of $T$. 

        We now present an example showing that the optimal sequence for a given game and $\beta$ is not always unique. 
        \hfill \qedsymbol
\begin{example}[Non-uniqueness] 
\label{eg:nonunique}
Consider the game below, where the payoffs are listed for the row and the columns players, respectively. 
    \begin{align*}
            \centering
            \begin{tabular}{|c|c|c|c|c|c|c|c|}
            \hline & $D$& $C_1$ &$C_2$& $C_3$ &$C_4$&$C_5$\\
            \hline
            $D$ & 0,0 & 6,0 & 6,0  & 8,0 & 9,0&10,0\\
            \hline $C_1$ & 0,6& 4.5,4.5 & 0,0 & 0,0 & 0,0  & 0,0 \\
            \hline $C_2$ & 0,6&  0,0 & 5,5 & 0,0 & 0,0  & 0,0 \\
            \hline $C_3$ & 0,8&  0,0 & 0,0 & 6,6 & 0,0  & 0,0 \\
            \hline $C_4$ & 0,9&  0,0 & 0,0 & 0,0 &  7,7 & 0,0 \\
            \hline $C_5$ & 0,10&  0,0 & 0,0 & 0,0 &   0,0 & 8,8 \\\hline
            \end{tabular}
    \end{align*}
    If $\beta=1/2$, both of the following sequences are stable and optimal:
     \begin{align*}
         (a^1_i)_{i \in \mathbb{N}} &= (C_2,C_3,C_5,C_5,C_5,C_5,\ldots)\\
         (p^1_i)_{i \in \mathbb{N}} &= (5,6,8,8,8,8,\ldots)\\
         (p^{1*}_i)_{i \in \mathbb{N}} &= (6,8,10,10,10,10,\ldots),
     \end{align*}
     and 
     \begin{align*}
        (a^2_i)_{i \in \mathbb{N}} &= (C_1,C_4,C_5,C_5,C_5,C_5,\ldots)\\
         (p^2_i)_{i \in \mathbb{N}} &= (4.5,7,8,8,8,8,\ldots)\\
         (p^{2*}_i)_{i \in \mathbb{N}} &= (6,9,10,10,10,10,\ldots),
     \end{align*}
both with total discounted utility 12. Stability can be shown via Equation (\ref{eq:stable}). Assume for the sake of contradiction that there exists $(p'_i)_{i \in \mathbb{N}}$ with total discounted utility larger than 12. This implies $p'_0+\beta p'_1>8$, since the total discounted utility from step $i=2$ onwards is at most $4$. Stability at $i=0$ implies $p'^*_0 \leq (1-\beta) \sum_{i=0}^\infty {\beta^i p_i} \leq (0.5) (p_0 + 8)$. Of the 6 available actions, this equation is only fulfilled by $D, C_1,$ and $C_2$. However, $p'_0+\beta p'_1> 8$ implies $p'_0> 4$, so $p'_0 \in \{4.5,5\}$. If $p'_0=4.5$, we must have $p'_1=8$, but this is not stable since it would imply $p'_0+ \beta p'^*_1 = 9.5 > (1-0.25) (4.5+8) \geq (1-\beta^2)\left(\sum_{i=0}^\infty \beta^i p_i\right)$. If $p'_0=5$, then $p'_1=8$ would similarly make the sequence unstable, so we must have $p'_1=7$ since $p'_0+\beta p'_1>8$. But then  $p'_0+ \beta p'^*_1 = 9.5 > (1-0.25) (5+3.5+4) \geq (1-\beta^2)\left(\sum_{i=0}^\infty \beta^i p_i\right), $ so once again stability is violated. Hence, 12 is the optimal total discounted utility. 

\end{example}
\subsection{Proofs of Lemmas \ref{lemma:repeating} and \ref{lemma:optimalgoal}} \label{appendix:goal_lemmas}
\subsubsection{Proof of Lemma \ref{lemma:repeating}}
     Take any stable $(p_i)_{i \in \mathbb{N}}$ that never reaches an infinitely repeating payoff, or reaches one that is not the highest in the sequence. We will show that it is not optimal. Let $p_\Omega$ be the largest payoff that appears in $(p_i)_{i \in \mathbb{N}}$ (well-defined since $\{p^{(j)}\}_{j\in [n]}$ is finite) and let $i^*$ be the index it first appears. We define a new sequence $(p'_i)_{i \in \mathbb{N}}$ as: \begin{align*}p'_i = \begin{cases} p_i & \text{for }i \leq i^* \\ p_\Omega & \text{otherwise} \end{cases}\end{align*} By assumption, there exists some $i> i^*$ with $p_i < p_\Omega = p'_i$, implying $\sum_{i=0}^\infty \beta^i p'_i > \sum_{i=0}^\infty \beta^i p_i$. We prove  the stability of $(p'_i)_{i \in \mathbb{N}}$ by showing that it fulfills Equation (\ref{eq:stable}) for each $k \in \mathbb{N}$.\\
     \noindent\textbf{For }$\boldsymbol{k \leq i^*}$: 
        \begin{align*}
            \sum_{i=0}^{k-1} \beta^i p'_i + \beta^k p'^*_k &=   \sum_{i=0}^{k-1} \beta^i p_i + \beta^k p^*_k \\&
            \leq (1-\beta^{k+1}) \sum_{i=0}^\infty \beta^i p_i  \\&< (1-\beta^{k+1}) \sum_{i=0}^\infty \beta^i p'_i 
        \end{align*}
        where the first inequality follows from the stability of $(p_i)_{i \in \mathbb{N}}$ at step $k$.  \\
        \noindent\textbf{For }$\boldsymbol{k > i^*}$:  Stability at $i^*$ (shown above) implies:
        \begin{align}
            \sum_{i=0}^{i^*-1} \beta^i p'_i+ \beta^{i^*} p^*_\Omega
            \leq (1-\beta^{i^*+1}) \sum_{i=0}^\infty \beta^{i} p'_i 
        \end{align}
        where we use the fact that $p'^*_i = p^*_\Omega$ for all $i \geq i^*$. Subtracting $\sum_{i=0}^{i^*-1} \beta^i p'_i$ and multiplying both sides by $\beta^{k-i^*}$ gives us:
        \begin{align}
            \beta^{k} p^*_\Omega \leq \sum_{i=i^*}^\infty \beta^{i-i^*+k} p'_{i} - \beta^{k+1} \left(\sum_{i=0}^\infty\beta^i p'_i \right)  \label{eq:foreverstable}
        \end{align}
        Notice that $\sum_{i=i^*}^\infty \beta^{i-i^*+k} p'_{i}  =\sum_{i=k}^\infty \beta^{i} p'_{i+i^*-k} = \sum_{i=k}^\infty \beta^{i} p'_{i}$ since for $i\geq k>i^*$, we have $p'_i= p_\Omega = p'_{i+i^{*}-k}$. Noting that $p^*_\Omega= p'^*_k$ and adding $\sum_{i=0}^{k-1} \beta^i p'_i$ to both sides of the equation gives us the stability result at step $k$. Intuitively this reflects that for every step after $i^*$, the decision to deviate or not is equivalent.\\
    Hence, $(p'_i)_{i \in \mathbb{N}}$ is stable and surpasses $(p_i)_{i \in \mathbb{N}}$, proving the non-optimality of the latter.
    \hfill\qedsymbol
\subsubsection{Proof of Lemma \ref{lemma:optimalgoal}}
   We prove the lemma in two steps: First we show that if there is any stable sequence for some $\beta$ (if not, the lemma is vacuously true), then for sufficiently large $\beta$ there is a stable sequence with $p_\Omega$ as the goal value. Second, we show there exists a $\beta'$ such that for all $\beta>\beta'$, the total discounted utility of the sequence is strictly larger than that of any other sequence with a smaller goal value.  
   
   Assume $(p_i)_{i \in \mathbb{N}}$ is a stable sequence for some $\beta \in [0,1)$. By the proof of Lemma \ref{lemma:repeating}, we can assume that $(p_i)_{i \in \mathbb{N}}$ converges to an infinitely repeating goal value $p$ (with $p\geq p_i$ for all $i$). Note that if $p= p_\Omega$, then we are done. Otherwise, by stability of $(p_i)_{i \in \mathbb{N}}$ at step $k=0$, we have:
    \begin{align}
    p^*_0  \leq (1-\beta) \sum_{i=0}^\infty \beta^i p_i  \leq (1-\beta) \sum_{i=0}^\infty \beta^i p = p < p_\Omega
    \end{align}
     Since $p_0^* < p_\Omega$, we can choose $a =\max\left( \left \lfloor \frac{p^*_\Omega-p_\Omega}{ p_\Omega-p_0^*}\right\rfloor+1, 1\right)$. We define a new sequence $(p'_i)_{i\in \mathbb{N}}$ as $p'_i = \begin{cases} p_0 &\text{for }i<a\\ p_\Omega &\text{ otherwise}\end{cases}$. Say $P' = \sum_{i=0}^\infty \beta^i p'_i$. We now prove the stability of this sequence, showing it fulfills (\ref{eq:stable}) for each $k$ for sufficiently large $\beta$:\\
        \noindent \textbf{For }$\boldsymbol{k < a}$: Note that as $\beta\rightarrow 1$, we have $(1-\beta)P'= (1-\beta^a) p_0 + \beta^a p_\Omega \rightarrow p_\Omega > p^*_0=p'^*_k$. Hence we can choose $\beta$ large enough that $p'^*_k \leq (1-\beta)P' \leq \sum_{i=0}^\infty p'_{i+k} \beta^i-\beta P'$ where the second inequality follows from the fact that $(p'_i)_{i \in \mathbb{N}}$ is non-decreasing. Multiplying both sides by $\beta^k$ and adding $\sum_{i=0}^{k-1}\beta^i p'_i$ gives the stability condition at step $k$.\\
         \noindent \textbf{For }$\boldsymbol{k \geq a}$: 
         Note that as $\beta \rightarrow 1$, we have $\beta \left(\frac{p_\Omega}{1-\beta} - P' \right)=\beta \sum_{i=0}^{a-1} \beta^i (p_\Omega - p'_i)  \rightarrow a \cdot (p_\Omega - p_0) > p^*_\Omega - p_\Omega$. Hence for sufficiently large $\beta$, we will have $p^*_\Omega + \beta P' < p_\Omega + \beta \frac{p_\Omega}{1-\beta} =   \frac{p_\Omega}{1-\beta}$. Since $k\geq a$ we have $p'^*_k = p^*_\Omega$ and $p_i = p_\Omega$ for $i\geq k$. Hence, multiplying both sides by $\beta^k$ and adding $\sum_{i=0}^{k-1}\beta^i p_i$ gives us the stability condition at step $k$.

       Next, we show that for large enough $\beta$, $(p'_i)_{i\in \mathbb{N}}$ will surpass any sequence without $p_\Omega$ as different goal value. Say $p_{\Psi}=\max \{p^{(j)}| j \in [n] , p^{(j)} < p_\Omega\}$ is the largest payoff after $p_\Omega$. As shown above, as $\beta \rightarrow 1$, we have $(1-\beta)P' \rightarrow p_\Omega > p_\Psi$, so there exists $\beta'$ such that for all $\beta>\beta'$, we have $P' > \frac{p_\Psi}{1-\beta}$. Any sequence that does not contain $p_\Omega$ will have its total utility bound by $\frac{p_\Psi}{1-\beta}$ and hence is surpassed by $(p'_i)_{i\in \mathbb{N}}$ for $\beta > \beta'$. Any sequence that contains $p_\Omega$ but does not converge to it is non-optimal by Lemma \ref{lemma:repeating}. This proves the non-optimality of any sequence that does not have goal value $p_\Omega$ for $\beta>\beta'$, completing the proof of the lemma.
       \hfill\qedsymbol
\subsection{Proof of Proposition \ref{prop:hstability}}\label{appendix:hstability}
We investigate the stability condition given in Equation (\ref{eq:stable}) in the $\beta \rightarrow 1$ limit and for sequences with goal value $p_\Omega$. Given a sequence $(p_i)_{i \in \mathbb{N}}$ with $p_i = p_\Omega$ for all $i \geq z$, the stability condition at step $k$ is:
\begin{align}
        \sum_{i=0}^{k-1} \beta^i p_i+ \beta^k p^*_k 
 \leq (1-\beta^{k+1}) \left(\sum_{i=0}^{z-1} \beta^{i} p_i  + \frac{\beta^z p_\Omega}{1-\beta} \right) \label{eq:repeatingstability}
\end{align}
Taking the $\beta \rightarrow 1$ limit, the left hand side of (\ref{eq:repeatingstability}) becomes $\sum_{i=0}^{k-1}p_i+p^*_k$, and the right hand side  (using L'Hôpital's rule)  becomes $(k+1) p_\Omega$. Hence, stability is satisfied if and only if $\sum_{i=0}^{k-1}p_i+p^*_k < (k+1) p_\Omega$, or equivalently:
\begin{align}
    \sum_{i=0}^{k-1} (p_\Omega - p_i) > p^*_k - p_\Omega 
\end{align}
The inequality is strict since for sufficiently large $\beta$, the derivative of the left minus right hand sides of (\ref{eq:repeatingstability}) with respect to $\beta$ is negative, so if the limits of the two sides are equal, there is a $\beta'$ such that the stability fails for all $\beta > \beta'$. Intuitively, the inequality needs to be strict because for any $\beta<1$, a ``tie'' between the two sides of the equation would be broken towards the side of deviation, as that deviation payoff comes earlier.
\hfill\qedsymbol

\subsection{Proofs of Claims from Theorem \ref{thm:fptas}}\label{appendix:thm3_proofs}
     \subsubsection{Proof of Claim \ref{claim:tilde}}
    Before Line \ref{line:tilde}, Algorithm \ref{alg:fptas} can only halt on Line \ref{line:error} or \ref{line:no_halt}. If it halts on Line \ref{line:error}, this implies $t^{(j)} \geq 0$ for all $j\in [n]$, so no action will satisfy (\ref{eq:hstability}) for $k=0$, implying there is no stable sequence at all, and the algorithm appropriately raises an error. Otherwise, the set $F$, assigned on Line \ref{line:eligible}, contains all legal first actions that do not meet the final hazing threshold alone. If the algorithm halts on Line \ref{line:no_halt}, then $F$ is empty, meaning that any legal first action will also be the final action in the hazing sequence, since it will have hazing cost at least $\Delta+1$. Then the optimal sequence is simply the legal action with the least hazing cost, which is what is assigned to $\hat{H}$ on Line \ref{line:pre_no_halt}, implying $\hat{H}=H^*$ as claimed, where $H^*$ is the optimal hazing. If the algorithm has not halted by Line  \ref{line:tilde}, this implies $F \neq \emptyset$. We then have $\Tilde{H} = k \cdot h^{(j^*)}$ for some $j^* \in F$. Notice that the hazing sequence $(h_{i})_{i\in \{0\}\cup [k-1]}$ where $h_i =h^{(j^*)}$ is stable at every step (since $t^{(j^*)}<0$) and has total hazing cost $k \cdot h^{(j^*)} > \Delta$, implying it is a stable sequence. Then by the optimality of $H^*$, we have $H^* \leq \Tilde{H}$. Since $k$ is the minimum integer s.t. $k \cdot h^{(j^*)} > \Delta$ (and $k \geq 2$ by definition of $F$), we have $H^* \geq \Delta +1 > (k-1) h^{(j^*)} \geq (k-k/2) h^{(j^*)}= \frac{\Tilde{H}}{2}$, completing the proof of the claim. 
    \hfill \qedsymbol

\subsubsection{Proof of Claim \ref{claim:bounds}}
        We have $ f^{(j)} = \left\lfloor \frac{h^{(j)} }{\delta} \right\rfloor \leq   \frac{h^{(j)} }{\delta} \Rightarrow  f^{(j)} \cdot \delta \leq h^{(j)}$. Moreover, notice that any action with $h^{(j)}\leq \frac{\varepsilon}{3} \Tilde{H}$ is added to $S$ in Line \ref{line:small}, so any action that does not get added to $S$ (i.e., any action acted upon by Lines \ref{line:normal_hazing}-\ref{line:fptas_endfor}) has $\frac{h^{(j)} }{\delta}>  \frac{\varepsilon}{3} \Tilde{H} / \delta = \frac{3}{\varepsilon}$. Therefore, $f^{(j)} = \left\lfloor \frac{h^{(j)} }{\delta} \right\rfloor >  \frac{h^{(j)} }{\delta}-1 > \frac{3-\varepsilon}{\varepsilon}$, implying:
        \begin{align}
            h^{(j)} &\leq \delta (f^{(j)}+1) = f^{(j)}\cdot  \delta (1+1/f^{(j)})\\& < f^{(j)}\cdot  \delta \left(1+\frac{\varepsilon}{3-\varepsilon}\right) \leq  f^{(j)}\cdot \delta \left(1+\frac{\varepsilon}{2}\right)
        \end{align}
        since $\varepsilon \leq 1$. 
\hfill\qedsymbol

\subsubsection{Proof of Claim \ref{claim:agreement}}
It is sufficient to induct on each execution of the lines where $T$ or $L$ are edited, which are Lines $\ref{line:prefor}$ and $\ref{line:pre_fptas_endfor}-\ref{line:fptas_endfor}$. Notice that in Line $\ref{line:prefor}$ (which gets executed once), all $T[k]$ and $L[k]$ are $\textit{inf}$ except for $L[0]=[]$ and $T[0]=0$, which satisfies the sums in the claim. Using this as the base case, assume that $T$ and $L$ satisfies the claim before an execution of Lines $\ref{line:pre_fptas_endfor}-\ref{line:fptas_endfor}$; we want to show that the claim is still satisfied after this single execution. We know that $T[k] \neq \textit{inf}$ by Line \ref{line:substability}. Hence, using the inductive hypothesis, after Line $\ref{line:pre_fptas_endfor}-\ref{line:fptas_endfor}$ are executed, we will have $\sum_{j' \in L[k+f^{(j)}]} f^{(j')}= \sum_{j' \in L[k]+ [j]} f^{(j')}= \sum_{j'\in L[k]} f^{(j')} + f^{(j)} = k + f^{(j)}$ and $\sum_{j' \in L[k+f^{(j)}]} h^{(j')}= \sum_{j' \in L[k]+ [j]} h^{(j)'}= \sum_{j'\in L[k]} h^{(j')} + h^{(j)} = T[k] + h^{(j)}=T[k+f^{(j)}]$, as desired. Since all other entries of $T$ and $L$ remain unchanged after this single execution, the lists continue to satisfy the claim. 
\hfill\qedsymbol

\subsubsection{Proof of Claim \ref{claim:overshoot}}
We first present and prove another claim: 
   \begin{claim} \label{claim:mono}
    Say that at some step $T[k]$ is set to some non-\textit{inf} value $a$. Then after the execution of the algorithm is completed, we will have $T[k] \geq a$ for some finite $T[k]$. 
\end{claim}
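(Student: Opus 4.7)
The plan is to prove Claim \ref{claim:mono} by a direct induction on the sequence of writes to $T[k]$ during the execution of Algorithm \ref{alg:fptas}. The key observation is that $T[k]$ is only ever mutated on Line \ref{line:pre_fptas_endfor}, and this line is reached only when the conditional guard on Line \ref{line:substability} is satisfied, so the desired monotonicity is essentially encoded in the update rule of the dynamic program itself.

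First, I would argue that every write to $T[k]$ assigns a finite value. The guard on Line \ref{line:substability} requires $T[k'] \neq \textit{inf}$ for the relevant source index $k'$, and $h^{(j)}$ is a finite non-negative input, so the right-hand side $T[k'] + h^{(j)}$ of the assignment is always finite whenever the update fires. In particular, this means that once $T[k]$ leaves its initial value of $\textit{inf}$, it can never return to $\textit{inf}$, since no future write would set it to that sentinel.

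Next, I would establish monotonicity by case analysis on the two branches of the guard. The guard permits an update in exactly one of two situations: either $T[k]$ currently equals $\textit{inf}$, in which case any finite assignment trivially makes it larger, or $T[k]$ is already finite and the guard explicitly requires the proposed new value $T[k'] + h^{(j)}$ to be strictly greater than the current $T[k]$. In either case, $T[k]$ is replaced by a finite value that is at least as large as before. Hence, once $T[k]$ first becomes equal to some finite $a$, every subsequent update (if any) replaces it with a strictly larger finite value, so at the end of the execution $T[k]$ is finite and satisfies $T[k] \geq a$, as desired.

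I do not expect any real obstacle here, as the claim is essentially a restatement of the update rule. The only subtlety worth being careful about is that the comparison $T[k'] + h^{(j)} > T[k+f^{(j)}]$ is only meaningful when $T[k+f^{(j)}]$ is already finite, so the initial transition from $\textit{inf}$ to a finite value must be handled by the separate disjunct $T[k+f^{(j)}] = \textit{inf}$ in the guard. Once this disjunction is unpacked, the induction on the sequence of update events goes through immediately and yields the claim.
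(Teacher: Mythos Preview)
Your proposal is correct and follows essentially the same approach as the paper: both observe that the only modification to $T[k]$ after it becomes finite (Line \ref{line:pre_fptas_endfor}) is guarded by the condition in Line \ref{line:substability} requiring the new value to be strictly larger, so monotonicity is immediate. Your version is simply a more detailed unpacking of the disjunction in the guard, whereas the paper compresses this into a single sentence.
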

\begin{proof}[Proof of Claim \ref{claim:mono}]
        Notice that once any $T[k]$ is set to a non-\textit{inf} value only modification to $T$ during the algorithm (Line \ref{line:pre_fptas_endfor}) is conditional on the new value being strictly larger (the if statement in Line \ref{line:substability}); hence the claim follows. 
\end{proof}

Now, we go back to the proof of Claim \ref{claim:overshoot}. By assumption, the actions appear in order $1,...,n$ in $(h_i)_{i\in[\ell]} \in \{h^{(j)}\}_{j \in \{0\} \cup [n]}^\ell$. Say action $(j)$ appears $r_j\geq 0$ times. Say $T^z$ is the list after $z$ executions of the for loop in Line \ref{line:outer}, with $T^0$ defined as in Line \ref{line:prefor}.

     We claim that for all $z\in \{0,...,n\}$, we have $T^z\left[\sum_{j=1}^{z}f^{(j)}r_j\right] \geq \sum_{j=1}^{z}h^{(j)}r_j$, where the LHS is not \textit{inf}. We will prove this statement by strong induction. As a base case, we have $z=0$, we have $T^0[0]=0$, so the inductive claim is vacuously correct. Now, assuming that the inductive claim is true for $z=j-1$ for some $j\in \{1,...,n\}$, we would like to prove it for $z=j$. Notice that if $r_j =0$, by the inductive hypothesis and by Claim \ref{claim:mono}, we have $T^j\left[\sum_{j'=1}^{j}f^{(j')}r_{j'}\right]=T^j\left[\sum_{j'=1}^{j-1}f^{(j')}r_{j'}\right]\geq T^{j-1}\left[\sum_{j'=1}^{j-1}f^{(j')}r_{j'}\right] \geq \sum_{j'=1}^{j-1}h^{({j'})}r_{j'} = \sum_{j'=1}^{j}h^{({j'})}r_{j'}$, and the inductive claim is true. Otherwise, we have $r_j \geq 1$ and the if statement Line \ref{line:small} is false by assumption on $(h_i)_{i \in [\ell]}$. Consider the inner for loop in Line \ref{line:inner}. Notice that since $\sum_{i=0}^\ell h_i \leq \Tilde{H}$, we have $\sum_{j'=1}^{j}r_{j'} f^{(j')} \leq g$, hence $\sum_{j'=1}^{j-1}r_{j'} f^{(j')} + a \cdot f^{(j)} \leq g-f^{(j)}$ for all $a \in \{0,...,r_j-1\}$. Say $T^j_a$ is the final list after the $k=\sum_{j'=1}^{j-1}r_{j'} f^{(j')} +  a \cdot f^{(j)}$ iteration of the for loop in Line \ref{line:inner} is complete, with $T^j_{-1}$ being the list right before $k=\sum_{j'=1}^{j-1}r_{j'} f^{(j')} $ iteration starts. Notice we have $\textit{inf} \neq T^j_{-1}[k] \geq T^{j-1}[z] \geq \sum_{j'=1}^{j-1}h^{(j')}r_{j'} > t^{(j)}$ where three inequalities follow from Claim \ref{claim:mono}, the inductive hypothesis, and the stability of $(h_i)_{i \in \{0\}\cup  [\ell]}$,  respectively. If Lines \ref{line:pre_fptas_endfor}-\ref{line:fptas_endfor} get executed, then we assign $T_0^j[k+f^{(j)}] = T^j_{-1}[k]+h^{(j)} \geq \sum_{j'=1}^{j-1}h^{(j')}r_{j'} + h^{(j)}$. If they do not get executed, this implies we already have  $\textit{inf} \neq T^j_{-1}[k+f^{(j)}] \geq  T^j_0[k]+h^{(j)} \geq \sum_{j'=1}^{j-1}h^{(j')}r_{j'} + h^{(j)}$ and $T^j_0=T^j_{-1}$ since the list will not be modified, showing that $T^j_{0}[k+f^{(j)}] \geq \sum_{j'=1}^{j-1}h^{(j')}r_{j'} + h^{(j)}$ to be true regardless. Repeating this process for the $k=\sum_{j'=1}^{j-1}r_{j'} f^{(j')} +  a \cdot f^{(j)}$ iteration of the for loop for each $a \in \{1,...,r_j-1\}$ (each of which comes after $a=0$) ensures that $T^j_a[k+f^{(j)}] \geq  \sum_{j'=1}^{j-1}h^{(j')}r_{j'} + (a+1)\cdot h^{(j)}$ for each $a$. Then, using Claim \ref{claim:mono} and the previous inequality for $a= r_{j}-1$, we have:
    \begin{align}
        T^j\left[\sum_{j'=1}^{j}f^{(j')}r_{j'} \right] &\geq  T^j_{r_j-1}\left[\sum_{j'=1}^{j}f^{(j')}r_{j'} \right] \geq \sum_{j'=1}^{j}h^{(j')}r_{j'}
    \end{align}
    completing the inductive proof. 
    \hfill\qedsymbol

\end{document}